\documentclass[journal]{IEEEtran}
\usepackage{cite}
\usepackage{amsmath,amssymb,amsfonts, amsthm}
\usepackage{graphicx}
\usepackage[compatibility=false]{caption}
\usepackage{floatrow}
\usepackage{subcaption}
\usepackage{booktabs}
\usepackage{siunitx}
\usepackage{tabularx}
\usepackage{algorithm}
\usepackage{tikz}
\usetikzlibrary{shapes}
\usepackage{algpseudocode}
\usetikzlibrary{positioning}
\newcolumntype{Y}{>{\raggedright\arraybackslash}X}
\usepackage{textcomp}
\usepackage{graphicx}
\usepackage{multirow}
\usepackage{etoolbox}
\usepackage[colorlinks=true,linkcolor=blue,citecolor=blue, urlcolor=blue]{hyperref}
\usepackage{cleveref}
\makeatletter
\patchcmd{\@makecaption}
  {\scshape}
  {}
  {}
  {}
\makeatother

\newtheorem{theorem}{Theorem}
\newtheorem{lemma}{Lemma}
\newtheorem{definition}{Definition}
\newtheorem{remark}{Remark}
\newtheorem{assumption}{Assumption}
\newtheorem{property}{Property}
\newtheorem{coro}{Corollary}
\usepackage{cuted}
\usepackage[english]{babel}
\usepackage{xcolor}
    \makeatletter
 \let\old@ps@headings\ps@headings
 \let\old@ps@IEEEtitlepagestyle\ps@IEEEtitlepagestyle
 \def\confheader#1{
 \def\ps@headings{
 \old@ps@headings
 \def\@oddhead{\strut\hfill#1\hfill\strut}
 \def\@evenhead{\strut\hfill#1\hfill\strut}
 }
 \def\ps@IEEEtitlepagestyle{
 \old@ps@IEEEtitlepagestyle
 \def\@oddhead{\strut\hfill#1\hfill\strut}
 \def\@evenhead{\strut\hfill#1\hfill\strut}
 }
 \ps@headings
 }
 \makeatother
 \usepackage[pscoord]{eso-pic}

  \makeatletter
\newcommand{\linebreakand}{%
  \end{@IEEEauthorhalign}
  \hfill\mbox{}\par
  \mbox{}\hfill\begin{@IEEEauthorhalign}
}
\makeatother  
\begin{document}
\title{Adaptive Tracking Control of Euler-Lagrange Systems with Time-Varying State and Input Constraints
}
\author{Poulomee~Ghosh and Shubhendu~Bhasin
\thanks{Poulomee Ghosh and Shubhendu Bhasin are with Department of Electrical Engineering, Indian Institute of Technology Delhi, New Delhi, India. 
       {\tt\small (Email: Poulomee.Ghosh@ee.iitd.ac.in, sbhasin@ee.iitd.ac.in)}}}
\maketitle
\begin{abstract}
This paper presents an adaptive control framework for Euler–Lagrange (E–L) systems that enforces user-defined time-varying state and input constraints in the presence of parametric uncertainties and bounded disturbances. The proposed design integrates a time-varying barrier Lyapunov Function (TVBLF) with a saturated control law to guarantee constraint satisfaction without resorting to real-time optimization. A key contribution is the development of an offline, verifiable feasibility condition that certifies the existence of a feasible control policy for any prescribed pair of time-varying state and input envelopes. Additionally, we prove boundedness of all closed-loop signals. Real-time experiments conducted on a 2-DoF helicopter model validate the efficacy and practical viability of the proposed method.
\end{abstract}
\section{Introduction}
\subsection{Motivation}
In many modern safety-critical applications that can be modeled by Euler–Lagrange (E–L) dynamics, ensuring safety is not merely a design preference but a fundamental operational requirement. Small violations in safe operating conditions, whether in robotic manipulators, aerial vehicles, or surgical robots, can lead to mechanical damage, performance degradation, or even compromise human safety. Therefore, the system must remain within its admissible safe region at all times. These safety requirements can often be translated into constraints on plant states and control input, and enforcing these limits throughout the system evolution is crucial to prevent undesirable behavior. 
In a helicopter system, for example, violations of the safe operating region, whether due to higher angular positions or velocities, can result in the rotor assembly striking the frame, triggering unmodeled aerodynamic effects, or inducing unstable oscillations. Moreover, the actuators can only supply bounded voltage and torque, and exceeding these limits may result in actuator saturation and performance degradation. Therefore, constraint satisfaction is critical for safe and reliable operation, and the task becomes significantly challenging in the presence of parametric uncertainties and disturbances.
\subsection{State of the Art: Constrained Control with Static Bounds}
In existing literature, several methods have been proposed to incorporate constraints into nonlinear systems. Optimization-based techniques, such as model predictive Control (MPC) \cite{camacho2007constrained, zheng1995stability, mpc11}, control barrier function (CBF)-based approaches \cite{zcbf1, breeden2023robust, icra_cbf}, impose constraints on plant states and control input but typically demand substantial computational resources and accurate model knowledge, which are often not suitable for uncertain systems with limited computational power. Barrier Lyapunov Functions (BLFs) \cite{BLF, BLF2, salehi2020safe} provide a computationally efficient alternative by embedding state constraints directly into the stability analysis. While BLFs effectively handle static state constraints in presence of parametric uncertainties, the high control effort demanded near constraint boundaries can drive the actuator into saturation.
In practical systems, actuator saturation is often the dominant bottleneck that leads to unsafe behavior or even loss of stability \cite{li2018stability}.\\
Classical Adaptive controllers ensure asymptotic convergence of the tracking error in presence parametric uncertainties in E–L systems \cite{arc1, roy2017adaptive}, but they inherently assume unconstrained dynamics and unlimited control authority. Consequently, they can generate input that saturates the actuators or drive the system outside the safe region. In our earlier work \cite{ghoshtac, myel1}, we integrated BLF with a saturated controller in an adaptive framework to impose constraints on both state and input under parametric uncertainties and disturbances. However, that framework relied on static constraint bounds, which may become conservative when the admissible safe region evolves during operation.

\subsection{What if the admissible safe regions are time-varying?}
In real systems, the allowable state and input envelopes often vary with time due to task-dependent safety requirements, dynamic workspace conditions, evolving load dynamics, or hardware limitations that degrade over time.  During the transient phase, tracking errors and control effort are typically large, whereas both reduce significantly as the system approaches steady state. Imposing static constraints under such varying conditions either leads to unnecessary conservatism or fails to reflect the system's true safety margins. In the context of the helicopter example discussed earlier, the safe pitch and yaw limits may change over time depending on the task; precise maneuvers may require tighter bounds, while more relaxed limits can be imposed during coarse movements. Further, the available actuator torque may reduce during aggressive maneuvers due to thermal and power limitations. Predefined time-varying constraints, therefore, represent more realistic safety requirements.
\subsection{State of the Art: Constrained Control with Dynamic Bounds}
 A few recent works address time-varying state constraints. prescribed performance control (PPC) \cite{ppc1, ppc2} and funnel control (FC)-based methods \cite{fc1, fc2} explicitly constrain the output tracking error within time-varying performance bounds; however, these methods typically do not consider actuator limits. Specifically, PPC adjusts the reference trajectory online whenever input saturation occurs, which fundamentally changes the desired tracking objective \cite{ppc3}. FC approaches address this issue by relaxing the performance funnel \cite{funnelsaturation1, fc3}; however, this may compromise the user-defined performance guarantees. Time-varying BLF (TVBLF) \cite{tvblf1} provides an effective approach for shaping the error trajectories within prescribed time-varying envelopes, ensuring that the plant states evolve within the safe set as long as the desired reference remains admissible.
\subsection{Contribution}
In this work, we propose an adaptive control framework for E–L systems that enforces both time-varying state and input constraints while ensuring tracking in the presence of parametric uncertainties and bounded disturbances. The primary contributions are as follows.
\begin{enumerate}
    \item 
    We design an optimization-free adaptive controller for uncertain E-L systems with time-varying state and input constraints. 
    
    \item We derive a sufficient, offline-verifiable feasibility condition that certifies the existence of a control policy for any user-defined time-varying state and input constraints.
    
    \item We validate the proposed approach through real-time experiments on a 2-DoF helicopter platform, demonstrating that the pitch and yaw positions and velocities are constrained within predefined time-varying bounds, while the required control torques remain within time-varying input limits.
\end{enumerate}

\subsection{Notation} In this paper, $\mathbb{R}$ represents the set of real numbers, $\mathbb{R}^{p \times q}$ denotes the set of $p\times q$ real matrices, and $I_{p}$ refers to the identity matrix in $\mathbb{R}^{p \times p}$. $\log(\cdot)$ denotes the natural logarithm of $(\cdot)$, and $\|.\|$ represents the Euclidean vector norm and its induced matrix norm. $\zeta^{(i)}(t)$ signifies the $i^{th}$ derivative of $\zeta$ with respect to time. The trace of $A\in \mathbb{R}^{n \times n}$ is represented by, $\text{tr}(A)$ and the eigenvalues of A with minimum and maximum real parts are denoted by $\lambda_{min}\{A\}$ and $\lambda_{max}\{A\}$, respectively.

\section{Problem Formulation}
\label{sec:LFSR}
Consider the dynamics of an uncertain Euler–Lagrange (E–L) system given by
\begin{align}
    M(q)\ddot{q}+V_m(q,\dot{q})\dot{q}+G_r(q)+F_d(\dot{q})=\tau
    \label{plant11}
\end{align}
where $q(t), \dot{q}(t), \ddot{q}(t) \in \mathbb{R}^n$ denote the generalized position, velocity, and acceleration vectors, respectively. The matrix $M(q) \in \mathbb{R}^{n \times n}$ represents the generalized inertia, while $V_m(q,\dot{q}) \in \mathbb{R}^{n \times n}$ corresponds to the uncertain centripetal–Coriolis term. The vectors $G_r(q) \in \mathbb{R}^n$ and $F_d(\dot{q}) \in \mathbb{R}^n$ denote the uncertain generalized gravitational and frictional forces. The control input is represented by $\tau(t) \in \mathbb{R}^n$.\\
The control objective is to design a feasible control policy $\tau(t)$ such that the plant states $q(t)$, $\dot{q}(t)$ track the user-defined reference trajectories $q_d(t)$, $\dot{q}_d(t)$, respectively, while simultaneously satisfying the following constraints on the plant states and control input.

\vspace{0.3cm}
\noindent \textbf{State constraints:} The generalized position and velocity must remain within the pre-specified time-varying envelopes 
\begin{align}
    q(t)\in \Omega_q(t) \triangleq \big\{ q \in \mathbb{R}^n \mid \|q\| < \phi_q(t)\big\} \qquad \forall t \ge 0\label{state_con_el_1}\\
    \dot{q}(t)\in\Omega_{\dot{q}}(t) \triangleq \big\{ \dot{q} \in \mathbb{R}^n\mid \|\dot{q}\| < \phi_{\dot{q}}(t) \big\} \qquad \forall t \ge 0
    \label{state_con_el_2}
\end{align}
where, $\phi_{q},\;\phi_{\dot{q}}\colon [0,\infty)\to (0, \infty)$ are continuously differentiable, user-defined positive functions.

\vspace{0.3cm}
\noindent \textbf{Input constraint:} The control input must satisfy
\begin{align}
\tau(t)\in\Omega_{\tau}(t) \triangleq \big\{ \tau \in \mathbb{R}^n \mid\|\tau\| \leq \phi_{\tau}(t) \big\} \qquad \forall t \ge 0
\label{input_con_el}
\end{align}
where, $\phi_{\tau} :[0,\infty)\rightarrow (0, \infty) $ is a continuously differentiable user-defined positive function.\\
For subsequent controller design and stability analysis, the following standard properties of E–L dynamics are assumed to hold \cite{Vidyasagar}:
\begin{property}
\label{prop_sym_pd}
The inertia matrix $M(q)$ is symmetric and positive-definite, satisfying
\begin{align}
    k_{m_{1}} \|\mu\|^2 \leq \mu^T M(q) \mu \leq k_{m_2} \|\mu\|^2
\end{align}
for all $\mu \in \mathbb{R}^n$, where $k_{m_{1}}$ and $k_{m_{2}}$ are known positive constants.
\end{property}
\begin{property}
\label{prop_bound}
    The centripetal-Coriolis matrix $V_m(q,\dot{q})$, the gravity vector $G_r(q)$ and the friction vector $F_d(\dot{q})$ satisfy 
    \begin{align}
        & \|V_m(q,\dot{q})\|\leq k_v\|\dot{q}\|\\
        &\|G_r(q)\|\leq k_g\\
        & \|F_d(\dot{q})\|\leq k_{f_1}+k_{f_2}\|\dot{q}\|
    \end{align}
    where $k_v$, $k_g$, $k_{f_{1}}$ and $k_{f_{2}}$ are known positive constants. 
\end{property}
\begin{remark}
\label{remark_known_bounds}
For E–L systems, the bounds, $k_{m_2}, k_v, k_g, k_{f_1}$, and $k_{f_2}$ depend on physical parameters such as link lengths, masses, inertial properties, etc. which are typically available from system specifications or can be identified through standard parameter estimation or experimental calibration.
\end{remark}
\begin{property}
\label{prop_skew}
The matrix $\dot{M}(q) - 2 V_m(q,\dot{q})$ is skew-symmetric, i.e.,
\begin{align}
    \mu^T \big( \dot{M}(q) - 2 V_m(q,\dot{q}) \big) \mu = 0, \quad \forall \mu \in \mathbb{R}^n.
\end{align}
\end{property}

\begin{property}
\label{prop_lip}
The system is linearly parameterizable, i.e.,
\begin{align}
    Y_1(q,\dot{q},\ddot{q}) \theta = M(q) \ddot{q} + V_m(q,\dot{q}) \dot{q} + G_r(q) + F_d(\dot{q})
\label{Property4}
\end{align}
where $Y_1: \mathbb{R}^n \times \mathbb{R}^n \times \mathbb{R}^n \rightarrow \mathbb{R}^{n \times m}$ is the known regression matrix and $\theta \in \mathbb{R}^m$ is the unknown constant parameter vector.
\end{property}
\section{Proposed Methodology}
\label{PM}
The position and velocity tracking errors are defined as 
\begin{align}
    e\triangleq q-q_d\\
    \dot{e}\triangleq \dot{q}-\dot{q}_d
\end{align}
To facilitate the constrained control design, we introduce the filtered tracking error
\begin{align}
    r \;\triangleq\; \dot e + \alpha\, e 
    \label{fte}
\end{align}
where $\alpha\in \mathbb{R}$ is a known positive constant.
\subsection{Constraint Transformation}
We provide details of how constraints on plant states are transformed into constraints on tracking errors.
\begin{assumption}
\label{ref_model_assumption}
  The reference trajectory $q_d(t)$, its first and second derivatives, $\dot{q}_d(t)$ and $\ddot{q}_d(t)$, respectively, are bounded such that
\begin{align}
    & \|q_d(t)\| \leq \phi_{q_d}(t) < \phi_q(t) && \forall t \geq 0 \label{assump1a}\\
    & \|\dot{q}_d(t)\|\leq \phi_{\dot{q}_d}(t) < \phi_{\dot{q}}(t) && \forall t \geq 0 \label{assump1b}\\
    & \|\ddot{q}_d(t)\|\leq \phi_{\ddot{q}_d}(t) && \forall t \geq 0 \label{assump1c}
    \end{align}
    where $\phi_{q_d},\phi_{\dot{q}_d}, \phi_{\ddot{q}_d}:[0,\infty)\rightarrow (0, \infty )$ are continuously differentiable known positive functions.
\end{assumption}
Provided Assumption~\ref{ref_model_assumption} holds, the state constraints can be transformed to constraints on the tracking errors
\begin{align}
    &e(t)\in\Omega_e(t)\triangleq\{e\in\mathbb{R}^n\mid\|e\|<\phi_e(t)\} \qquad \forall t\ge 0 \label{error_con}\\
    &\dot{e}(t)\in\Omega_{\dot{e}}(t)\triangleq\{\dot{e}\in\mathbb{R}^n\mid \|\dot{e}\|<\phi_{\dot{e}}(t)\} \qquad \forall t\ge 0 \label{errordot_con}
\end{align}
where, $\phi_e,\phi_{\dot e}:[0,\infty)\to(0,\infty)$ are continuously differentiable functions defined as
\begin{align}
    &\phi_e(t)\triangleq \phi_q(t)-\phi_{q_d}(t) \quad \forall t\ge 0\\
    &\phi_{\dot{e}}(t)\triangleq\phi_{\dot{q}}(t)-\phi_{\dot{q}_d}(t) \quad \forall t \ge 0
\end{align}
Satisfying \eqref{error_con} and \eqref{errordot_con} imply satisfying \eqref{state_con_el_1} and \eqref{state_con_el_2}, respectively.\\

\begin{assumption}
Initially, the trajectory tracking and filtered tracking errors satisfy the following bounds
\begin{align}
   &\|e(0)\|\leq \phi_e(0)\label{ezero}\\
   &\|{r}(0)\|<\phi_r(0)
\end{align}
\label{error_assumption}
\end{assumption}
\begin{remark}
    Assumption~\ref{error_assumption} implies that the plant states initially remain within the user-defined safe set to prevent constraint violation at the initial time, which is a standard assumption in BLF-based constrained control design \cite{BLF, BLF2, elblf1}. 
\end{remark}
\begin{lemma}
\label{constraint_conv_lemma}
    [Constraint conversion from $(e,\dot e)$ to $r$]\label{lemma_constraint_conv} Let  $\alpha$ satisfy the gain condition 
    \begin{align}
    \alpha<\frac{\phi_{\dot{e}}(t)}{\phi_{e}(t)} \qquad\forall t \ge 0
    \label{alpha}
    \end{align} 
    and suppose there exists a continuously differentiable function $\phi_r:[0,\infty)\rightarrow(0,\infty)$, which satisfies the following inequality
\begin{align}
 \phi_r(t)<\min\Big\{\dot \phi_e(t)+\alpha\phi_e(t),\ \ \phi_{\dot e}(t)-\alpha\phi_e(t)\Big\} 
    \label{phir_1}
\end{align}
Provided Assumption~\ref{error_assumption} holds and the filtered tracking error remains bounded, i.e.,
\begin{align}
    r(t)\in\Omega_r(t)\triangleq\{r\in\mathbb{R}^n \mid\|r\|<\phi_r(t)\} \qquad \forall t\ge 0
\end{align}
the trajectory tracking errors remain within the time-varying envelopes, i.e., $ e(t)\in \Omega_e(t)$, $\dot{e}(t)\in\Omega_{\dot{e}}(t)$ for all $t\ge 0$.
\end{lemma}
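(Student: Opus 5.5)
The proof will treat the definition of the filtered error \eqref{fte} as a linear first-order system driven by $r$, namely $\dot e=-\alpha e+r$. We will first show that $\|e(t)\|<\phi_e(t)$ for all $t$ using a comparison argument on $\|e\|$. Then we will recover the bound on $\dot e$ algebraically through $\dot e=r-\alpha e$.

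\textbf{Step 1 (position error).} Set $v(t)\triangleq\|e(t)\|$. Wherever $e\neq 0$ we have $\dot v=e^T\dot e/\|e\|$, and the Cauchy--Schwarz inequality gives
\begin{align}
\dot v \le -\alpha v+\|r\| .
\end{align}
To avoid differentiability issues at $e=0$, we will use $\frac{d}{dt}\tfrac12\|e\|^2=-\alpha\|e\|^2+e^Tr$ instead, or work with upper Dini derivatives. Define the margin $w(t)\triangleq\phi_e(t)-v(t)$. By Assumption~\ref{error_assumption}, $w(0)\ge 0$.

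We argue by contradiction. Suppose $w(t_1)<0$ for some $t_1$, and let $t^*$ be the last time before $t_1$ at which $w\ge 0$ (taking $t^*=0$ is allowed). Then $v(t^*)=\phi_e(t^*)>0$, and $v(t)\ge\phi_e(t)>0$ on $[t^*,t_1]$, so $v$ is differentiable there. On this interval,
\begin{align}
\dot w=\dot\phi_e-\dot v\ge \dot\phi_e+\alpha v-\|r\| \ge \dot\phi_e+\alpha\phi_e-\|r\| .
\end{align}
The hypothesis $r\in\Omega_r$ gives $\|r\|<\phi_r$, and the first branch of \eqref{phir_1} gives $\phi_r<\dot\phi_e+\alpha\phi_e$. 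Hence $\dot w>0$ on $[t^*,t_1]$, so $w(t_1)>w(t^*)\ge 0$, which is a contradiction. This establishes $\|e(t)\|\le\phi_e(t)$ for all $t$.

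Strictness follows the same way. If $\|e(\bar t)\|=\phi_e(\bar t)$ at some $\bar t>0$, then $\dot w(\bar t)>0$, so $w$ would have been negative just before $\bar t$, which we just excluded. The only remaining case is $t=0$ when \eqref{ezero} holds with equality. There $\dot w(0)>0$, so $w>0$ immediately afterwards. I will state this boundary case explicitly: the strict inclusion $e(t)\in\Omega_e(t)$ holds for $t>0$, and at $t=0$ as well whenever \eqref{ezero} is strict.

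\textbf{Step 2 (velocity error).} From $\dot e=r-\alpha e$ we get
\begin{align}
\|\dot e\|\le\|r\|+\alpha\|e\|<\phi_r+\alpha\phi_e<\phi_{\dot e},
\end{align}
where the last inequality is the second branch of \eqref{phir_1}. The gain condition \eqref{alpha} is precisely what makes $\phi_{\dot e}-\alpha\phi_e>0$, so that a positive $\phi_r$ satisfying \eqref{phir_1} can exist at all.

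\textbf{Main obstacle.} The delicate step is Step 1: the non-smoothness of $\|e\|$ at the origin, and the careful first-exit-time argument with equality allowed at $t=0$. I would resolve both with the squared-norm or Dini-derivative comparison, restricted to the interval on which $\|e\|\ge\phi_e>0$.
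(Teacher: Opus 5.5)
Your proof is correct and follows the same route as the paper. Both use a comparison argument on $\|e\|$ driven by the first branch of \eqref{phir_1}, which you carry out by hand as a first-exit argument on the margin $\phi_e-\|e\|$ where the paper invokes the comparison lemma with $\phi_e$ as a strict supersolution, followed by the identical bound $\|\dot e\|\le\|r\|+\alpha\|e\|<\phi_r+\alpha\phi_e<\phi_{\dot e}$. Your caveat at $t=0$ is also correct and is a point the paper's proof glosses over when it asserts $\|e(t)\|<\phi_e(t)$ for all $t\ge 0$: Assumption~\ref{error_assumption} only gives $\|e(0)\|\le\phi_e(0)$, so strict inclusion of $e$ is guaranteed only for $t>0$, while the $\dot e$ bound stays strict at $t=0$ because $\|r(0)\|<\phi_r(0)$.
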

\begin{proof}
Let $w(t)\triangleq \|e(t)\|$. For $e(t)\neq 0$,
\begin{align}
    \dot w(t)
    = \frac{e^\top \dot e}{\|e\|}
    = \frac{e^\top (r-\alpha e)}{\|e\|}
    \le \|r\| - \alpha\,\|e\|
    = \|r\| - \alpha\,w.
    \label{eq:w_dot_ineq}
\end{align}
At points where $e(t)=0$, the upper Dini derivative satisfies $D^+ w \le \|\dot e\| \le \|r\|$; so, \eqref{eq:w_dot_ineq} still holds in the sense of an upper bound.\\
\textit{Step 1 (bound on $\|e\|$):}
Using \eqref{phir_1} and \eqref{eq:w_dot_ineq},
\begin{align}
    \dot w(t) < \phi_r(t) - \alpha\,w(t)
    \label{eq:w_comp}
\end{align}
Consider the scalar comparison system $\dot z = \phi_r - \alpha z$ with $z(0)=\phi_e(0)$. Using \eqref{phir_1},
\begin{align}
    \dot \phi_e(t) > \phi_r(t) - \alpha\,\phi_e(t)
\end{align}
hence $\phi_e(\cdot)$ is a supersolution of \eqref{eq:w_comp}. Since $w(0)\le \phi_e(0)$ by \eqref{ezero}, the comparison lemma yields $w(t)< \phi_e(t)$ $\forall t\ge 0$, which implies $\|e(t)\|< \phi_e(t)$ $\forall t\ge 0$.\\
\textit{Step 2 (bound on $\|\dot e\|$):}
Using $\dot e = r - \alpha e$, for all $t \ge 0$
\begin{align}
    \|\dot e(t)\| \le \|r(t)\| + \alpha\,\|e(t)\|
    < \phi_r(t) + \alpha\,\phi_e(t)
    < \phi_{\dot e}(t)
\end{align}
This establishes $\dot{e}(t)\in \Omega_{\dot{e}}(t)$ $\forall t \ge 0$.
\end{proof}
\subsection{Error Dynamics}
Now, using the E–L dynamics in \eqref{plant11} and Property~\ref{prop_lip}, the derivative of filtered error dynamics can be expressed as
\begin{align}
    M\dot r = Y\theta + \tau + V_m r 
    \label{eq:r_dynamics}
\end{align}
where, $Y\in\mathbb{R}^{n\times m}$ is the known regressor matrix, $\theta\in\mathbb{R}^m$ is the unknown constant parameter vector and $Y\theta$ can be expressed as
\begin{equation}
\label{eq:Ytheta}
Y\theta
= M\big(\alpha\,\dot e - \ddot q_d\big)
  + V_m\big(r - \dot q\big)
  - F_d - G_r
\end{equation}
\begin{assumption}
\label{theta_assumption}
    The norm of the unknown plant parameter vector is bounded such that $\|\theta\|<\bar{\theta}$, where $\bar{\theta}>0$ is assumed to be known.
\end{assumption}
\begin{remark}
Assumption~\ref{theta_assumption} is commonly used in projection-based adaptive control literature \cite{arc1}, where adaptive parameters are bounded within a predefined set, avoiding parameter drift in the presence of disturbances. In practical systems, unknown parameters are often naturally bounded due to physical constraints or modeling assumptions, and therefore, it is reasonable to assume the bound  \(\bar{\theta}\). Moreover, by incorporating system-specific information, \(\bar{\theta}\) can be estimated without being overly conservative. 
\end{remark}

\subsection{Saturated Control Design for Time-Varying Input Constraints}
\label{input_con_subsection}
To bound the control input magnitude within a time-varying envelope, we design an auxiliary control input $\tau_a(t)\in \mathbb{R}^n$ as
\begin{align}
&\tau_a=-Y\hat{\theta}-Kr+\frac{\dot{\phi_{r}}}{\phi_{r}}k_{m_{2}}r
\label{pc1}
\end{align}
where, $\hat{\theta}(t)\in\mathbb{R}^m$ is the estimate  of the unknown parameter vector and $K\in \mathbb{R}^{n \times n}$ is a positive parameter gain.
To enforce the input constraint, we design a saturated controller
\begin{align}
&\tau(t)= \begin{cases}
\tau_a(t) & \text{if}\:\:\: \|\tau_a(t)\|\leq \phi_{\tau}(t)\\
\dfrac{\phi_{\tau}(t)}{\|\tau_a(t)\|}\tau_a(t) & \text{if}\:\:\: \|\tau_a(t)\|>\phi_{\tau}(t)
\end{cases}
\label{pc2}
\end{align}
Using (\ref{pc1}) and (\ref{pc2}), the closed-loop dynamics of the filtered tracking error can be expressed as
\begin{align}
M\dot{r}=Y\tilde{\theta}-Kr-V_mr+\frac{\dot{\phi_{r}}}{\phi_{r}}k_{m_{2}}r+\Delta \tau
    \label{edot}
\end{align}
where $\tilde{\theta}(t)\triangleq\theta(t)-\hat{\theta}(t)\in \mathbb{R}^{m}$ is the parameter estimation error and $\Delta \tau(t)\in\mathbb{R}^{m}$ is defined as the difference between the actual control input $\tau(t)$ and auxiliary control input $\tau_a(t)$, i.e., 
\begin{align}
    \Delta \tau(t) \triangleq \tau(t)-\tau_a(t)
\end{align}
where,
\begin{align}
\Delta\tau(t) 
= \begin{cases}
0, & \|\tau_a(t)\|\le \phi_\tau(t),\\[1mm]
\Big(1-\dfrac{\phi_\tau(t)}{\|\tau_a(t)\|}\Big)\tau_a(t), & \|\tau_a(t)\|> \phi_\tau(t),
\end{cases}
\label{delta_tau}
\end{align}
The term `$\Delta\tau(t)$' captures the effect of input saturation; it vanishes when $\|\tau_a(t)\|\le\phi_\tau(t)$
 and otherwise acts as a bounded perturbation, which can further be written as 
\begin{align}
    \|\Delta \tau\|\leq\|\tau_a\|-\phi_{\tau}
    \label{Delta_tau_1}
\end{align}
where,
\begin{align}
    \|\tau_a\|&\leq \|Y \hat{\theta}\|+\|Kr\|+\|\dfrac{\dot{\phi}_r} 
    {\phi_r}k_{m_{2}}r\|
    \label{taua_bound}
    \end{align}
    Employing Property~\ref{prop_bound}, \eqref{taua_bound} can be expanded as
    \begin{align}
    \|\tau_a\|&\leq \bar{\theta}(k_{m_{2}}(\alpha \|\dot{e}\|+\|\ddot{q}_d\|)+k_v\|\dot{q}\|(\|r\|+\|\dot{q}\|)+k_g \nonumber \\ &+k_{f_{1}}+k_{f_{2}}\|\dot{q}\|)+\lambda_{max}\{K\}\|r\| + \dfrac{|\dot{\phi}_r|}{\phi_r}k_{m_{2}}\|r\|
    \label{vbound}
\end{align}
Now, solving the differential equation (\ref{fte}) and employing Assumption \ref{error_assumption}, we obtain  $\|e(t)\|\leq \|e(0)\|+\frac{\|r(t)\|}{\alpha}$ and $\|\dot{e}(t)\|\leq \alpha\|e(0)\|+2\|r(t)\|$ $\forall t\geq 0$, which in turn implies $\|\dot{q}(t)\|\leq \alpha \|e(0)\|+2\|r(t)\|+\phi_{\dot{q}}(t)$ $\forall t \geq 0$. 
substituting the bounds in (\ref{vbound}),
\begin{align}
    \|\tau_a\| 
     &\leq \Psi_1 \|r\|^2 + (\Psi_2 +\frac{|\dot{\phi}_r|}{\phi_r} k_{m_2})\|r\| + \Psi_3
     \label{tau_bound}
 \end{align}
where $\Psi_1 = 6\bar{\theta}k_v$ is a known positive constant and  $\Psi_2(t)= 
\bar{\theta}(2\alpha k_{m_2} +5k_v\big(\alpha\phi_e(0)+\phi_{\dot q_d}(t)\big) +2k_{f_2})+\lambda_{\max}\{K\}$, $\Psi_3(t)=\bar{\theta}[k_{m_2}\big(\alpha^{2}\phi_e(0)+\phi_{\ddot{q}_d}(t)\big)
+k_v\big(\alpha\phi_e(0)+\phi_{\dot q_d}(t)\big)^{2}
+k_g + k_{f_1}
+k_{f_2}\big(\alpha\phi_e(0)+\phi_{\dot q_d}(t)\big)]$ are known positive functions. Substituting the bounds in (\ref{Delta_tau_1}),
\begin{align}
    \|\Delta \tau\|\leq \Psi_1\|r\|^2+(\Psi_2+\frac{|\dot{\phi}_r|}{\phi_r} k_{m_2})\|r\|+\Psi_3-\phi_{\tau}
    \label{deltatau_bound}
\end{align}

\subsection{TVBLF-based Adaptive Law Design for Time-Varying State Constraints}\label{State_constraint_TVBLFEL}
To bound plant states within an pre-defined time-varying envelope, we employ a TVBLF \cite{tee2011control}, defined as follows.
\begin{figure}
    \centering
\includegraphics[width=0.8\linewidth]{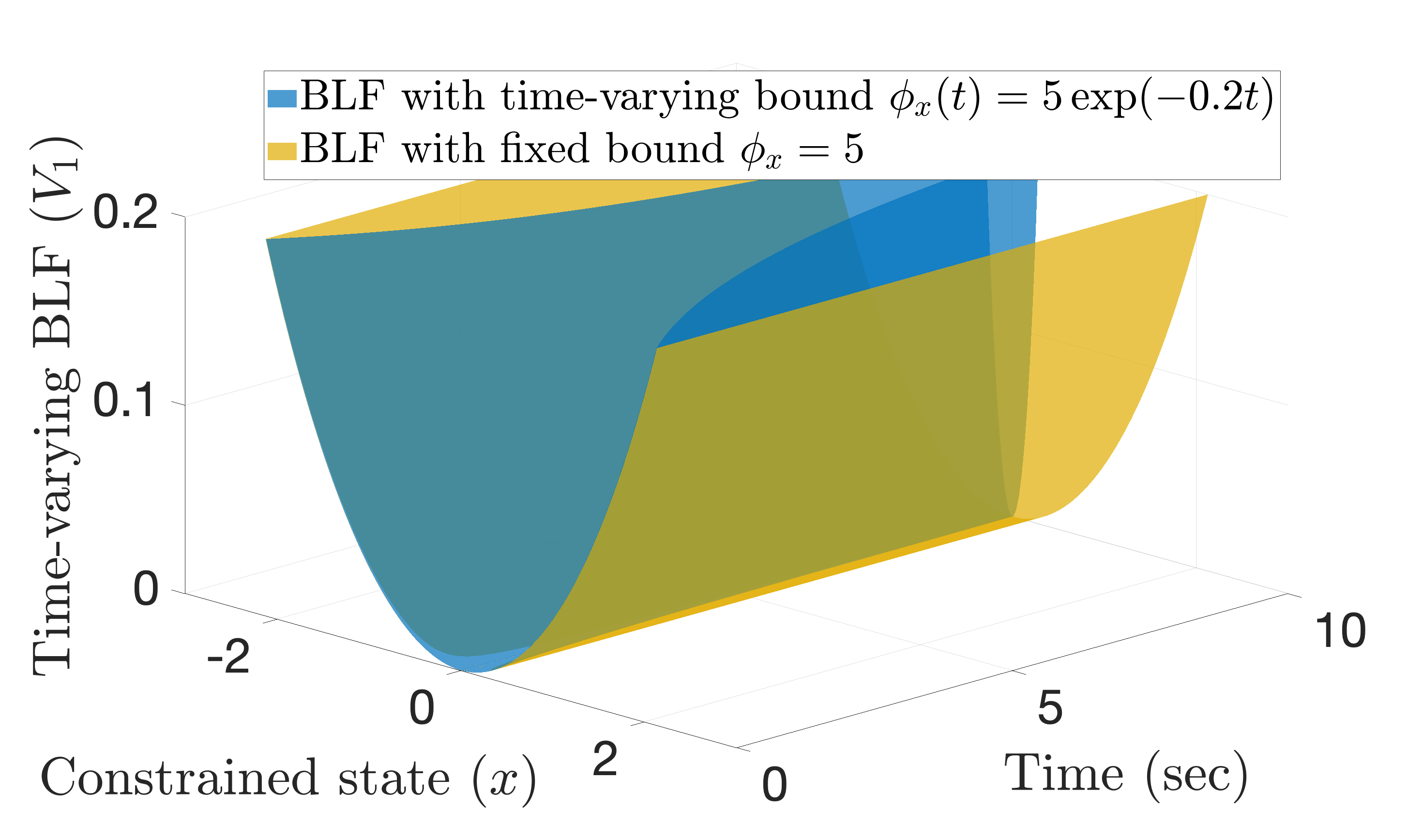}
    \caption{BLF \eqref{tvblf_def} with fixed and time-varying bounds.}
    \label{tvblf_figure}
\end{figure}
\begin{definition}[TVBLF]
Let \(h:\mathbb{R}^n\times\mathbb{R}_{\ge 0}\to\mathbb{R}\) be a continuously differentiable function, and define the open time-varying region
\begin{align}
\Omega_x(t)=\{x\in\mathbb{R}^n \mid h(x,t)>0\}
\end{align}
which contains the origin for all \(t\ge 0\). A TVBLF is a scalar function \(V(x,t)\), defined with respect to the system \(\dot{x}=f(x,t)\) on \(\Omega_x(t)\), that is continuous and positive definite in \(x\) for all \(t\ge 0\), has continuous first-order partial derivatives with respect to \(x\) and \(t\) at every point of \(\Omega_x(t)\), possesses the barrier property that \(V(x,t)\to\infty\) as \(x\) approaches the boundary of \(\Omega_x(t)\) (equivalently, as \(h(x,t)\to 0^+\)), and satisfies \(V(x(t),t)\le b\) for all \(t\ge 0\) along the solution of \(\dot{x}=f(x,t)\), for \(x(0)\in\Omega_x(0)\) and some positive constant \(b\).
\end{definition}
A commonly used TVBLF is the logarithmic barrier function of the form
\begin{equation}
V_1= \log\frac{\phi_x(t)^2}{\phi_x(t)^2-x^{\top}x}
\label{tvblf_def}
\end{equation}
defined on \( \Omega_x(t) =\{x\in\mathbb{R}^n\mid\|x\|<\phi_x(t)\}\), where  $\phi_{x}:[0,\infty)\to(0,\infty)$ is continuously differentiable positive function. As the state \(\|x\|\) approaches the boundary $\phi_x(t)$, the TVBLF diverges, i.e., $V_1\rightarrow \infty$. Fig.~\ref{tvblf_figure} shows the evolution of the safe sets in case of BLF \eqref{tvblf_def} with fixed and time-varying bounds.\\
Now, to constrain the filter tracking error $r(t)$, which in turn ensures tracking error and state constraint satisfaction, we consider the following TVBLF $V_r$ defined on the set $\Omega_r^{'}(t):\{r\in\mathbb{R}^n\mid\:k_{m_{2}}\|r\|^2<\phi_r^{'2}(t)\}$
\begin{align}
    &V_r\triangleq\frac{1}{2}\log \frac{\phi_r^{'2}}{\phi_{r}^{'2}-r^{\top}Mr}
\end{align}
where $\phi^{'}_r=\phi_r\sqrt{k_{m_{2}}}$. Since $r^{\top}Mr\leq k_{m_{2}}\|r\|^2$ (Property~\ref{prop_sym_pd}), $k_{m_{2}}\|r\|^2<\phi_r^{'2}(t)\implies r^{\top}Mr<\phi_r^{'2}(t)$ $\forall t \ge 0$. Further, as $r^{\top}Mr\rightarrow \phi_r^{'2}(t)$, i.e.,
when the constrained state $r(t)$ approaches the boundary of the safe set, the BLF $V_r\rightarrow \infty$.
For the proposed controller (\ref{pc1}) and (\ref{pc2}), the following adaptive update laws are defined, which will be employed in the subsequent Lyapunov analysis.
\begin{align}
    &\dot{\hat{\theta}}=\text{proj}_{\Omega_{\theta}}\bigg(\frac{\Gamma Y^Tr}{\phi_{r}^{'2}-k_{m_{2}}\|r\|^2}\bigg)  
    \label{updatelaw}
\end{align}
where $\Gamma\in\mathbb{R}^{m\times m}$ is a user-defined positive-definite matrix. The projection operator \cite{lavretsky2012robust}, denoted as \(\text{proj}(\cdot)\), ensures that parameters remain bounded within a convex and compact region in the parameter space defined by \(\Omega_{\theta}=\{\hat{\theta}\in\mathbb{R}^m|\|\hat{\theta}\|^2 \leq \bar{\theta}^2\}\). 
\begin{theorem}
\label{maintheorem_tvblf_el}
For the E-L system (\ref{plant11}), provided Assumptions~\ref{ref_model_assumption}-\ref{theta_assumption} and the gain condition \eqref{alpha} hold, the controller (\ref{pc1}), (\ref{pc2}) and the adaptive update law (\ref{updatelaw}) ensure closed-loop stability of the tracking error dynamics, time-varying state and input constraint satisfaction, i.e., $q(t),\dot{q}(t)\in \Omega_q(t)$, $\tau(t)\in \Omega_{\tau}(t)$, $\forall t \ge 0$ and boundedness of all closed-loop signals,  subject to the following feasibility condition (C1).\\
\textbf{C1:} The bound on the control input satisfies the following inequality,
\begin{align}
   \phi_{\tau}(t)>(\Psi_1\phi_r(t)+\Psi_2^{'}(t)+|\dot{\phi}_r(t)|k_{m_{2}})\phi_r(t)+\Psi_3(t)
    \label{f1}
\end{align}
where $\Psi_2(t)=\Psi_2(t)-\lambda_{min}\{K\}$ and $\Psi_1,\Psi_2(t),\Psi_3(t)$ are defined in Section~\ref{input_con_subsection}. 
\end{theorem}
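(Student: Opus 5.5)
We will establish the theorem with a Lyapunov argument built on the TVBLF $V_r$, augmented by a parameter-estimation term:
\begin{align*}
V = V_r + \tfrac{1}{2}\tilde\theta^{\top}\Gamma^{-1}\tilde\theta .
\end{align*}
The core of the argument is to show that $r(t)$ never reaches the boundary of $\Omega_r^{'}(t)$. Once that is established, Lemma~\ref{constraint_conv_lemma} delivers $e\in\Omega_e$ and $\dot e\in\Omega_{\dot e}$, and hence $q\in\Omega_q$ and $\dot q\in\Omega_{\dot q}$. The input constraint $\tau\in\Omega_\tau$ holds by construction of the saturation \eqref{pc2}, since $\|\tau\|\le\phi_\tau$ always. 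Thus the real work is the state-constraint and boundedness part.

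\emph{Step 1: the derivative of $V_r$.} On the maximal interval where $r\in\Omega_r^{'}(t)$, we differentiate $V_r$ and use \eqref{edot} together with Property~\ref{prop_skew}. This gives
\begin{align*}
\dot V_r=\frac{r^{\top}\big(Y\tilde\theta-Kr+\tfrac{\dot\phi_r}{\phi_r}k_{m_2}r+\Delta\tau\big)}{\phi_r^{'2}-r^{\top}Mr}-\frac{\dot\phi_r}{\phi_r}\,\frac{r^{\top}Mr}{\phi_r^{'2}-r^{\top}Mr}.
\end{align*}
The $\dot\phi_r$ term introduced in \eqref{pc1} is meant to dominate the time-variation term $-\tfrac{\dot\phi_r}{\phi_r}r^\top M r$ by means of $r^\top Mr\le k_{m_2}\|r\|^2$. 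When the sign of $\dot\phi_r$ is unfavourable we will bound this term using $|\dot\phi_r|$. It is this piece that later accounts for the $|\dot\phi_r|k_{m_2}$ contribution in C1.

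The denominator involves $r^\top M r$, whereas the update law \eqref{updatelaw} uses $\phi_r^{'2}-k_{m_2}\|r\|^2$. To reconcile them we will either (a) note that both denominators are positive and handle the mismatch through the projection inequality $\tilde\theta^{\top}\Gamma^{-1}(\dot{\hat\theta}-\text{proj})\le 0$, or (b) more cleanly, work with the comparison $V_r\le\frac12\log\frac{\phi_r^{'2}}{\phi_r^{'2}-k_{m_2}\|r\|^2}$. We will choose whichever makes the cross term $r^\top Y\tilde\theta/(\cdot)$ cancel against $-\tilde\theta^\top\Gamma^{-1}\dot{\hat\theta}$. After this cancellation, and using the projection property, we obtain
\begin{align*}
\dot V\le \frac{-\lambda_{\min}\{K\}\|r\|^2+\|r\|\,\|\Delta\tau\|}{\phi_r^{'2}-k_{m_2}\|r\|^2}.
\end{align*}

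\emph{Step 2: use C1 to make the saturation harmless near the barrier.} We substitute \eqref{deltatau_bound}, noting that $\Delta\tau\neq0$ only when $\|\tau_a\|>\phi_\tau$. The numerator is then at most
\begin{align*}
\|r\|\Big(\Psi_1\|r\|^2+\big(\Psi_2-\lambda_{\min}\{K\}+\tfrac{|\dot\phi_r|}{\phi_r}k_{m_2}\big)\|r\|+\Psi_3-\phi_\tau\Big).
\end{align*}
The bracket is increasing in $\|r\|$. Evaluated at $\|r\|=\phi_r$, it is negative exactly by C1, with $\Psi_2^{'}=\Psi_2-\lambda_{\min}\{K\}$. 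This also shows that near the barrier $\|\tau_a\|$ cannot exceed $\phi_\tau$, so saturation is inactive there and $\Delta\tau=0$.

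Consequently, there is a strip $\phi_r-\epsilon<\|r\|<\phi_r$ on which $\dot V\le -\lambda_{\min}\{K\}\|r\|^2/(\cdot)<0$. Away from the strip, $V_r$ is finite. Since $\tilde\theta$ is bounded by projection, $V$ stays bounded by $\max\{V(0),\sup_{\text{inner region}}V\}$. The quantity $V_r$ therefore stays bounded, and by the barrier property $\|r(t)\|<\phi_r(t)$ for all $t$; a standard maximal-solution argument extends the interval of existence to $[0,\infty)$.

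\emph{Step 3: boundedness.} Bounded $V$ gives bounded $r$, and then $e$ and $\dot e$ are bounded via Lemma~\ref{constraint_conv_lemma}. Projection keeps $\hat\theta$ bounded. Together with Assumption~\ref{ref_model_assumption}, $q$ and $\dot q$ are bounded, hence $Y$, $\tau_a$, $\tau$ and $\dot{\hat\theta}$ are bounded, with the denominator kept away from zero because $V_r$ is bounded.

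\emph{Main obstacle.} The delicate step is Step 1/2: the inexact cancellation of the $\tilde\theta$ cross term caused by the $r^\top Mr$ versus $k_{m_2}\|r\|^2$ denominator mismatch, and the sign handling of the $\dot\phi_r$ terms. We will first try to choose the Lyapunov candidate with $k_{m_2}\|r\|^2$ in place of $r^\top Mr$ inside the log, treating the inertia as a bound only in the quadratic form derivative. If that fails, we will absorb the residual into the negative-definite margin supplied by the strict inequality in C1.
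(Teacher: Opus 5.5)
Your outline follows the paper's own argument: the candidate \eqref{lyap}, cancellation of the cross term through \eqref{updatelaw}, substitution of \eqref{deltatau_bound}, C1, and a barrier argument. But the paper handles your ``main obstacle'' simply by writing down \eqref{vdot2_EL}, so it gives you no resolution. The obstacle is a genuine gap, and none of your remedies closes it. Put $d_1\triangleq\phi_r^{'2}-r^\top Mr$ and $d_2\triangleq\phi_r^{'2}-k_{m_2}\|r\|^2$, so $0<d_2\le d_1$ on $\Omega_r^{'}$. Using Property~\ref{prop_skew} and the projection inequality, the sharpest available bound (an equality when the projection is inactive) is
\[
\dot V\le\frac{-r^\top Kr+r^\top\Delta\tau}{d_1}+\frac{\dot\phi_r}{\phi_r}\,\frac{k_{m_2}\|r\|^2-r^\top Mr}{d_1}-r^\top Y\tilde\theta\,\frac{k_{m_2}\|r\|^2-r^\top Mr}{d_1d_2}.
\]
\begin{itemize}
\item \emph{Option (a).} The projection inequality $-\tilde\theta^\top\Gamma^{-1}\dot{\hat\theta}\le-\tilde\theta^\top Y^\top r/d_2$ is already used above. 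It reproduces the $d_2$ weight and cannot convert it into $d_1$.
\item \emph{Option (b).} It compares function values, not derivatives. If you instead take $\frac12\log(\phi_r^{'2}/d_2)$ as the candidate, its derivative contains $k_{m_2}r^\top\dot r/d_2$ with $\dot r=M^{-1}(\cdots)$. The cross term then becomes $k_{m_2}r^\top M^{-1}Y\tilde\theta/d_2$, which no implementable law cancels, and $r^\top M^{-1}V_mr$ is no longer removed by Property~\ref{prop_skew}.
\item \emph{The fallback.} It fails qualitatively. Whenever $r^\top Mr<k_{m_2}\|r\|^2$ on $\partial\Omega_r^{'}$ (e.g.\ if $\lambda_{\max}\{M(q)\}<k_{m_2}$), the last term grows like $1/d_2$ as $\|r\|\to\phi_r$, with the sign of the unknown $r^\top Y\tilde\theta$. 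All other terms carry $d_1$ and stay bounded. Replacing $d_1$ by the smaller $d_2$ in $-r^\top Kr/d_1$, as your Step~1 display and \eqref{vdot2_EL} do, does not give an upper bound. So no margin in the numerator, from strict C1 or any strengthening of it, can dominate this term.
\end{itemize}
Your Step~1 bound also drops the middle term, which is nonnegative when $\dot\phi_r>0$. The $|\dot\phi_r|k_{m_2}$ in C1 comes from $\|\tau_a\|$ inside \eqref{deltatau_bound} and does not pay for it, in particular when $\Delta\tau=0$.

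A second gap, which the paper shares in its step ``as $t\to t^\star$, $V_r\to\infty$'', is where the barrier sits. $V_r$ blows up only where $r^\top Mr=k_{m_2}\phi_r^2$. That set lies in $\{\|r\|\ge\phi_r\}$ and meets $\partial\Omega_r^{'}$ only where $r^\top Mr=k_{m_2}\|r\|^2$. So generically $V_r$ is finite at $\|r\|=\phi_r$, exactly where \eqref{updatelaw} becomes singular. Bounded $V$ therefore yields only $\|r\|<\sqrt{k_{m_2}/k_{m_1}}\,\phi_r$, and ``by the barrier property $\|r(t)\|<\phi_r(t)$'' is unjustified. For the same reason your Step~3 claim fails: bounded $V_r$ keeps $d_1$ away from zero, not the update-law denominator $d_2$.

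What is missing is a barrier whose blow-up set lies inside $\{\|r\|<\phi_r\}$ (e.g.\ $\phi_r^{'}=\sqrt{k_{m_1}}\,\phi_r$). It must be paired with a treatment of $r^\top Y\tilde\theta$ that does not need $d_1$ in the update law, for instance treating $Y\tilde\theta$ as a perturbation bounded via projection, $\|\tilde\theta\|\le2\bar\theta$. That route costs you conditions stronger than C1.

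Smaller points:
\begin{itemize}
\item C1 does not make the saturation inactive near the barrier. Since $\Psi_2^{'}=\Psi_2-\lambda_{\min}\{K\}$, it only yields $\|\Delta\tau\|<\lambda_{\min}\{K\}\|r\|$.
\item The strip is unnecessary. Your monotonicity remark already makes the bracket negative on all of $[0,\phi_r]$, which is how the paper gets $\dot V\le0$ throughout.
\item C1 as printed carries $|\dot\phi_r|k_{m_2}\phi_r$, whereas your bracket at $\|r\|=\phi_r$ needs $|\dot\phi_r|k_{m_2}$. So ``exactly by C1'' is guaranteed only when $\phi_r\ge1$.
\end{itemize}
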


\begin{proof}
Consider the candidate Lyapunov function $V:\Omega_r^{'}\times \mathbb{R}^{N}\rightarrow \mathbb{R}_{+}$ as,
\begin{align}
    V&=
    \frac{1}{2}
    \bigg[\log\frac{\phi_r^{'2}}{\phi_{r}^{'2}-r^{\top}Mr}+\tilde{\theta}^T\Gamma^{-1}\tilde{\theta}\bigg]
    \label{lyap}
\end{align}
Taking the time derivative of $V$ along the system trajectory 
\begin{align}
    \dot{V}=&\frac{1}{2(\phi_{r}^{'2}-r^{\top}Mr)}\bigg[2r^{\top}(Y\tilde{\theta}-Kr-V_mr+\frac{\dot{\phi_{r}}}{\phi_{r}}k_{m_{2}}r\nonumber\\
    &+\Delta\tau)+r^{\top}\dot{M}r\bigg]-\frac{\dot{\phi}_r r^{\top}Mr}{\phi_r(\phi_r^{'2}-r^{\top}Mr)}-\tilde{\theta}^{\top}\Gamma^{-1}\dot{\hat{\theta}}
    \label{vdot111_el}
    \end{align}
    Employing Property 2 and the adaptive update laws in (\ref{updatelaw}),
\begin{align}
    \dot{V}\leq&-\frac{r^TKr}{\phi_{r}^{'2}-k_{m_{2}}\|r\|^2}+\frac{r^T\Delta \tau}{\phi_{r}^{'2}-k_{m_{2}}\|r\|^2}
    \label{vdot2_EL}
\end{align}
 Substituting the bound of $\|\Delta \tau(t)\| $ from \eqref{deltatau_bound} in (\ref{vdot2_EL}),
\begin{align}
    \dot{V}\leq &-\frac{\lambda_{min}\{K\}\|r\|^2}{\phi_{r}^{'2}-k_{m_{2}}\|r\|^2}
    \nonumber\\
    & +\frac{ (\Psi_1\|r\|^2+(\Psi_2+\frac{|\dot{\phi}_r|}{\phi_r})\|r\|+\Psi_3-\phi_{\tau})\|r\|}{\phi_{r}^{'2}-k_{m_{2}}\|r\|^2} \\
    \leq &-\frac{\lambda_{min}\{K\}\|r\|}{\phi_{r}^{'2}-k_{m_{2}}\|r\|^2}\bigg [\frac{ \phi_{\tau}-\Psi_3}{\lambda_{min}\{K\}}\nonumber\\
     &-\bigg(\frac{\Psi_1\|r\|+\Psi_2+\frac{|\dot{\phi}_r|}{\phi_r}k_{m_{2}}}{\lambda_{min}\{K\}}-1\bigg)\|r\|\bigg]
    \label{vdot112}
\end{align}
Now, invoking Assumption~\ref{error_assumption}, i.e., \(\|r(0)\|<\phi_r(0)\)  which implies
\(k_{m_{2}}\|r(0)\|^2<\phi_r^{'2}(0)\), to achieve a stability result from (\ref{vdot112})  the following  condition must hold.
\begin{align}
    \phi_{\tau}>(\Psi_1\phi_r+\Psi_2^{'}+|\dot{\phi}_r|k_{m_{2}})\phi_r+\Psi_3
    \label{c11}
\end{align}
which is the feasibility condition C1, where $\Phi_2^{'}(t)=\Psi_2(t)-\lambda_{min}\{K\}$ is a known positive function. Provided C1 is satisfied, since \(k_{m_{2}}\|r(0)\|^2<\phi_r{'^2}\), employing Lemma 1, at $t=0$, (\ref{vdot112}) can be written as,
\begin{align}
   \dot{V}(0)\leq0
   \label{uub1}
\end{align}
Now, define $t^\star\triangleq\inf\{t>0:\; k_{m_{2}}\|r(t)\|^2=\phi_r^{'2}(t)\}$. Provided Assumption~\ref{error_assumption} and C1 hold
$r(t)\in\Omega_r$ for all $t\in[0,t^\star)$ and $\dot{V}(t)\le0$ for all \(t\in[0,t^\star)\).
Hence \(V(t)\le V(0)<\infty\) on \([0,t^\star)\). Consequently, if $t^\star<\infty$, then as $t\rightarrow t^\star$ the BLF term $V_r(t)\rightarrow\infty$, and therefore $V(t)\to\infty$,
contradicting the bound \(V(t)\le V(0)<\infty\) on \([0,t^\star)\).
Therefore, \(t^\star=\infty\) it can be inferred that
\begin{align}
k_{m_{2}}\|r(t)\|^2<\phi_r^{'2}(t)\implies \|r(t)\|< \phi_r (t)\hspace{10pt} \forall t\geq0
  \label{lambda1}
\end{align}
Invoking Lemma~\ref{constraint_conv_lemma}, it can be proved that 
\begin{align}
    &\|e(t)\|< \phi_e(t)\hspace{20pt} \forall t\geq0 \label{ebound_lyap_el}\\
    &\|\dot{e}(t)\|<\phi_{\dot{e}}(t)\hspace{20pt} \forall t\geq0 \label{edot_bound_lyap_el}
\end{align} 
i.e., the trajectory tracking error and its derivative will be constrained within the user-defined safe sets.\\
Furthermore, since the desired trajectory and its derivative are bounded, i.e., $\|q_d(t)\|\leq \phi_{q_{d}}(t)$, $\|\dot{q}_d(t)\|< \phi_{\dot{q}_d}(t)$, it follows directly from (\ref{ebound_lyap_el}) and \eqref{edot_bound_lyap_el} that the proposed controller ensures the plant states remain within the pre-defined time-varying envelopes
\begin{subequations}
\begin{align}
    &\|q(t)\|<\phi_e(t)+\phi_{q_{d}}(t)= \phi_q(t)  && \forall t \geq 0\\
     &\|\dot{q}(t)\|<\phi_{\dot{e}}(t)+\phi_{\dot{q}_d}(t)= \phi_{\dot{q}}(t)&& \forall t \geq 0
\end{align}
\end{subequations}
Since the closed-loop trajectory tracking error and the parameter estimation errors are bounded and $\theta$ is constant, it follows that the estimated parameters are also bounded, i.e., $\hat{\theta}(t)\in \mathcal{L}_{\infty}$. Consequently, the plant states $q(t), \dot{q}(t)$ and control input $\tau(t)$ remain bounded at all times. Thus, the proposed controller guarantees that all the closed-loop signals are bounded.
\end{proof}
\begin{remark}
The feasibility condition (C1) captures the interrelationships and associated trade-offs between the prescribed time-varying error constraints, the reference trajectory, and the admissible input constraints. Particularly, C1 ensures that the control authority available through the input constraint \( \phi_\tau(t) \) should be sufficient to handle the prescribed dynamic bounds on the plant states (consequently, on the errors) imposed by \( \phi_r(t) \), despite the presence of system nonlinearities, parametric uncertainties, and external disturbances. The sufficient feasibility condition (C1) also shows that overly restrictive state constraints or excessively fast convergence requirements may render the control problem infeasible under a given input bound. Conversely, for a fixed input constraint envelope, C1 provides a guideline for selecting admissible performance functions and subsequently, time-varying constraints.
\end{remark}
Algorithm~\ref{algo1} provides a guideline to check the feasibility of any given time-varying constraint set $\{\phi_e(t),\phi_{\dot{e}}(t),\phi_{\tau}(t)\}$, where $\epsilon_1,\epsilon_2\in\mathbb{R}_{> 0}$ are user-defined constants.
\begin{algorithm}[h!]
\caption{Feasibility Check}
\textbf{Input:} (i) Time varying state and input constraints: $\phi_e(t)$, $\dot{\phi}_e(t)$, $\phi_{\dot{e}}(t)$, $\phi_{\tau}(t)$; (ii) Known parameters and functions: $\Psi_1$, $\Psi_2^{'}(t)$, $\Psi_3(t)$, $k_{m_{2}}$; (iii) Design parameters: $K$, $\epsilon_1$, $\epsilon_2$.\\
\textbf{Output:} 
Feasibility or infeasibility of constraint set $\{\phi_{e},\phi_{\dot{e}}, \phi_{\tau}\}$.  

\begin{algorithmic}[1]
\State Compute \( \alpha_{\max}(t)=\frac{\phi_{\dot e}(t)}{\phi_e(t)} \)
\State Compute \( \underline{\alpha}_{\max}=\inf_{t\ge 0}\alpha_{\max}(t) \)  
\State Select \( \alpha \leftarrow \underline{\alpha}_{\max}-\epsilon_1 \)  

\State Compute 
\[\phi_{r_{max}}(t)=\min\Big\{\dot \phi_e(t)+\alpha\phi_e(t),\ \ \phi_{\dot e}(t)-\alpha\phi_e(t)\Big\}\]
\State Select $\phi_r(t)\leftarrow\phi_{r_{max}}(t)-\epsilon_2$
    \If{$\phi_{\tau}<(\Psi_1\phi_r+\Psi_2^{'}+|\dot{\phi}_r|k_{m_{2}})\phi_r+\Psi_3$} \hspace{2pt} 
    \State \Return{The constraint set  is infeasible (C1 violated)}
    \Else $\;${The constraint set is feasible}
    \EndIf
\end{algorithmic}
\label{algo1}
\end{algorithm}
\begin{remark}
    The feasibility condition C1 is a sufficient offline certificate derived under worst-case assumptions, including the maximum possible saturation error. Consequently, when the actual saturation effects are less severe than their worst-case bounds, C1 may yield a more conservative error envelope than what is strictly necessary. This conservatism trades off offline verifiability, which guarantees forward invariance of the prescribed constraint sets while preserving the original reference trajectory and the performance envelop. 
\end{remark}
\begin{coro}
\label{coro:corollary_vanishing_saturation_error_tvblf_el}
(\textit{Convergence under vanishing saturation error}) Suppose the saturation error vanishes, i.e., there exists $t_1\in[0,\infty)$ such that $\Delta \tau(t)=0$ $\forall t\ge t_1$. Then, asymptotic tracking is recovered, $e(t)\to 0$ and $\dot{e}(t)\to 0$ as $t\to\infty$.
\end{coro}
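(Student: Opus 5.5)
The plan is a Barbalat-type argument on $[t_1,\infty)$, using the Lyapunov function $V$ from \eqref{lyap} and the bounds already obtained in the proof of Theorem~\ref{maintheorem_tvblf_el}.

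\emph{Step 1: a negative semidefinite derivative after $t_1$.} For $t\ge t_1$ we have $\Delta\tau=0$, so \eqref{vdot2_EL} reduces to
\begin{align}
\dot V\le -\frac{\lambda_{min}\{K\}\|r\|^2}{\phi_r^{'2}-k_{m_2}\|r\|^2}\le -\frac{\lambda_{min}\{K\}}{\phi_r^{'2}(t)}\|r\|^2 .
\end{align}
Here we use that the denominator is positive and at most $\phi_r^{'2}$, which holds because $r(t)\in\Omega_r(t)$ by \eqref{lambda1}. The projection term contributes nonpositively, as in the proof of the theorem.

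\emph{Step 2: $r$ is square integrable.} $V$ is nonincreasing on $[t_1,\infty)$ and bounded below. The log term is nonnegative since $r^\top M r\ge 0$, and the parameter term is nonnegative since $\Gamma\succ0$. Integrating gives
\begin{align}
\int_{t_1}^\infty \frac{\lambda_{min}\{K\}\|r\|^2}{\phi_r^{'2}}\,dt\le V(t_1)<\infty .
\end{align}
Assuming $\phi_r$ is bounded above by some $\bar\phi_r$, this yields $r\in\mathcal{L}_2$. I expect the user-defined envelopes to be bounded functions; if they are not, I would add this as a mild hypothesis.

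\emph{Step 3: uniform continuity of $r$ (the main obstacle).} From the closed-loop dynamics \eqref{edot} with $\Delta\tau=0$,
\begin{align}
\dot r = M^{-1}\Big(Y\tilde\theta-Kr-V_m r+\tfrac{\dot\phi_r}{\phi_r}k_{m_2}r\Big).
\end{align}
The following are bounded:
\begin{itemize}
\item $M^{-1}$, by Property~\ref{prop_sym_pd} with $\|M^{-1}\|\le 1/k_{m_1}$;
\item $r$, $e$, $\dot e$, $q$, $\dot q$, by the theorem;
\item $\tilde\theta$, by projection;
\item $Y$ and $V_m$, via \eqref{eq:Ytheta}, Property~\ref{prop_bound} and Assumption~\ref{ref_model_assumption}.
\end{itemize}
Hence $\dot r\in\mathcal{L}_\infty$, provided $\dot\phi_r/\phi_r$ is bounded. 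This needs $\phi_r$ bounded away from zero and $\dot\phi_r$ bounded, and I would state this regularity of the envelopes explicitly. Then $r$ is uniformly continuous. Because $r\in\mathcal{L}_2\cap\mathcal{L}_\infty$ and $\dot r\in\mathcal{L}_\infty$, Barbalat's lemma gives $r(t)\to0$.

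\emph{Step 4: convergence of $e$ and $\dot e$.} The filter $\dot e=-\alpha e+r$ is exponentially stable with $\alpha>0$, so it is input-to-state stable. Therefore $r\to0$ implies $e\to0$; explicitly, $\|e(t)\|\le e^{-\alpha(t-t_1)}\|e(t_1)\|+\int_{t_1}^t e^{-\alpha(t-s)}\|r(s)\|ds\to0$. Finally $\dot e=r-\alpha e\to0$.
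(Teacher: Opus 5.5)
Your proposal is correct and follows the paper's proof essentially step for step: the reduced form of \eqref{vdot2_EL} on $[t_1,\infty)$ gives $r\in\mathcal{L}_2$, and boundedness of the right-hand side of \eqref{edot} gives $\dot r\in\mathcal{L}_\infty$; Barbalat's lemma then yields $r(t)\to 0$, and the exponentially stable filter \eqref{fte} gives $e(t)\to 0$ and $\dot e(t)\to 0$. The regularity conditions you state explicitly are used only tacitly in the paper, so spelling them out refines the paper's argument rather than departing from it: $\phi_r$ bounded above and away from zero, $\dot\phi_r$ bounded, and bounded envelopes so that $Y$ and $V_m$ stay bounded.
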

\begin{proof}
For $t\ge t_1$, with $\Delta \tau(t)=0$, \eqref{vdot2_EL} reduces to
\begin{align}
\dot V \le -\frac{\lambda_{\min}\{K\}\|r\|^2}{\phi_r^{'2}-k_{m_{2}}\|r\|^2}\le0
\label{vdot_coro_tvblfel}
\end{align}
From \eqref{vdot_coro_tvblfel}, we can show that $r(t)\in\mathcal{L}_2$ and from \eqref{edot} we obtain $\dot{r}(t)\in \mathcal{L}_{\infty}$, which proves $r(t)$ is uniformly continuous on $[t_1,\infty)$. Consequently,  Barbalat’s lemma \cite{slotine} yields  $\lim_{t\to\infty} r(t)=0$ and from \eqref{fte} it can be obtained that $e(t)\to 0$ and $\dot{e}(t)\to 0$ as $t\to 0$, which imply $q(t)\to q_d(t)$ and $\dot{q}(t)\to \dot{q}_d(t)$ as $t\to \infty$, respectively.
\end{proof}
\begin{coro}
(\textit{Plant under disturbance}) Consider the uncertain E-L plant with an unknown bounded disturbance.
\begin{align}
     M(q)\ddot{q}+V_m(q,\dot{q})\dot{q}+G_r(q)+F_d(\dot{q})=\tau+d
     \label{plant_disturbance_el}
\end{align}
where the disturbance $d(t)\in\mathbb{R}^n$ is bounded such that $\sup_{t\ge0}\|d(t)\|<\bar{d}$ and $\bar{d}>0$ is a known constant. Under the projection-based adaptive law \eqref{updatelaw},
the time-varying state and input constraints \eqref{state_con_el_1}--\eqref{input_con_el} are satisfied,
and all closed-loop signals remain  bounded for all $t\ge0$ provided the following
modified feasibility condition holds for all $t\ge0$:
\begin{equation}
\label{C1d_el}
\phi_{\tau}>(\Psi_1\phi_r+\Psi_2+|\dot{\phi}_r|k_{m_{2}}-\lambda_{min}\{K\})\phi_r+\Psi_3^{'}
\end{equation}
where, $\Psi_3^{'}=\Psi_3+\bar{d}$ is a known positive constant.
\end{coro}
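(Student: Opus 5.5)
The argument will follow the proof of Theorem~\ref{maintheorem_tvblf_el}, with the disturbance carried as one extra additive term through the error dynamics and the Lyapunov analysis.

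\textbf{Step 1 (error dynamics).} With \eqref{plant_disturbance_el}, the filtered error dynamics \eqref{eq:r_dynamics} become $M\dot r = Y\theta + \tau + d + V_m r$. The same $Y\theta$ as in \eqref{eq:Ytheta} is used, with $d$ entering additively on the right. Under the controller \eqref{pc1}--\eqref{pc2}, the closed loop \eqref{edot} then becomes
\begin{align*}
M\dot r = Y\tilde\theta - Kr - V_m r + \tfrac{\dot\phi_r}{\phi_r}k_{m_2} r + \Delta\tau + d .
\end{align*}
The saturation bound \eqref{deltatau_bound} depends only on $\tau_a$, $\phi_\tau$ and the bounds on $e,\dot e,\dot q$, so it is unchanged.

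\textbf{Step 2 (Lyapunov derivative).} I take the same $V$ as in \eqref{lyap} and repeat the computation \eqref{vdot111_el}. I use Property~\ref{prop_skew}, the cancellation of the $\dot\phi_r$ terms, and the projection property $\tilde\theta^\top\Gamma^{-1}\mathrm{proj}(\cdot)\ge \tilde\theta^\top\Gamma^{-1}(\cdot)$, which holds because $\theta\in\Omega_\theta$ by Assumption~\ref{theta_assumption}. This gives the analogue of \eqref{vdot2_EL} with one extra term:
\begin{align*}
\dot V \le \frac{-r^\top K r + r^\top\Delta\tau + r^\top d}{\phi_r^{'2}-k_{m_2}\|r\|^2}.
\end{align*}
Since the projection keeps $\|\hat\theta\|\le\bar\theta$ independently of $d$, the estimates stay bounded, and the regressor bound used for \eqref{tau_bound} is still valid.

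\textbf{Step 3 (bound and feasibility).} I bound $r^\top d\le \bar d\|r\|$ and merge it with the $\Psi_3$ term in \eqref{deltatau_bound}. This reproduces \eqref{vdot112} with $\Psi_3$ replaced by $\Psi_3^{'}=\Psi_3+\bar d$. The bracketed factor is decreasing in $\|r\|$, so on $\|r\|<\phi_r$ it is positive whenever it is positive at $\|r\|=\phi_r$, and that is exactly \eqref{C1d_el}. Hence $\dot V\le0$ on $\Omega_r^{'}(t)$.

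\textbf{Step 4 (invariance).} The first-exit-time argument from the theorem then carries over unchanged. I define $t^\star$, note that $V(t)\le V(0)<\infty$ on $[0,t^\star)$, and conclude that $V_r\to\infty$ would give a contradiction, so $t^\star=\infty$ and $\|r(t)\|<\phi_r(t)$ for all $t$. Lemma~\ref{constraint_conv_lemma} then yields $e\in\Omega_e$ and $\dot e\in\Omega_{\dot e}$, and hence \eqref{state_con_el_1}--\eqref{state_con_el_2} by Assumption~\ref{ref_model_assumption}. The input constraint \eqref{input_con_el} holds by construction of \eqref{pc2}. All closed-loop signals are bounded because $r$, $e$, $\dot e$, $\hat\theta$ and $\tau$ are bounded.

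\textbf{Main obstacle.} The delicate point is handling the sign-indefinite terms $r^\top\Delta\tau$ and $r^\top d$ uniformly in $\|r\|$. For $\|r\|$ near $\phi_r$ I follow the paper's route and factor out $\|r\|$, so the condition is checked on the boundary. For $\|r\|$ near zero, $\dot V\le 0$ cannot be guaranteed, so I will show that the sublevel sets of $V$ do not reach the barrier. This holds because $\dot V<0$ in a neighbourhood of the boundary $\|r\|=\phi_r$ by \eqref{C1d_el}, which is what the first-exit argument needs.
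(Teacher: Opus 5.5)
Your Steps 1--4 reproduce the paper's own argument, but that argument has a genuine gap, and it sits in Step 3. The bound \eqref{deltatau_bound} comes from \eqref{Delta_tau_1}, $\|\Delta\tau\|\le\|\tau_a\|-\phi_\tau$. That inequality holds only while saturation is active. When $\|\tau_a\|<\phi_\tau$ its right-hand side is negative, whereas $\Delta\tau=0$; the correct bound is $\|\Delta\tau\|\le\max\{0,\|\tau_a\|-\phi_\tau\}$.

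In Theorem~\ref{maintheorem_tvblf_el} this slip is harmless: without $d$, the unsaturated case gives $\dot V\le -r^\top Kr/(\phi_r^{'2}-k_{m_2}\|r\|^2)\le0$ directly. Here, however, the $-\phi_\tau$ term is exactly what you use to dominate $\bar d\|r\|$, and $\phi_\tau$ plays no role when saturation is inactive. In that regime all you get is
\[
\dot V\le\frac{\|r\|\big(\bar d-\lambda_{\min}\{K\}\|r\|\big)}{\phi_r^{'2}-k_{m_2}\|r\|^2}.
\]
So ``$\dot V\le0$ on $\Omega_r^{'}(t)$'' is false. The claim in your last paragraph, that \eqref{C1d_el} gives $\dot V<0$ near $\|r\|=\phi_r$, also fails: in the unsaturated regime it needs $\lambda_{\min}\{K\}\phi_r(t)>\bar d$, which \eqref{C1d_el} does not imply.

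No argument can close this from \eqref{C1d_el} alone. That condition is only a lower bound on $\phi_\tau$, while $\|\tau\|\le\|\tau_a\|$ and $\tau_a$ does not depend on $\phi_\tau$. Concretely, take $n=1$, $M=m$, $V_m=G_r=F_d=0$, $q_d\equiv0$ and admissible constant envelopes. As long as $q\in\Omega_q$ and $\dot q\in\Omega_{\dot q}$, you have $|\tau|\le|\tau_a|\le T:=\alpha\bar\theta\phi_{\dot q}+\lambda_{\max}\{K\}(\phi_{\dot q}+\alpha\phi_q)$. Choose $\bar d>T$, a constant $d\in(T,\bar d)$, and $\phi_\tau$ large enough for \eqref{C1d_el}. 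Then $m\ddot q\ge d-T>0$ forces $\dot q$ out of $\Omega_{\dot q}$ in finite time. The corollary as stated is therefore false, and the paper's proof has the same flaw.

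What is missing is a hypothesis tying the disturbance to the feedback gain, e.g.\ $\bar d\le\rho\,\lambda_{\min}\{K\}\phi_r(t)$ for all $t\ge0$ and some constant $\rho\in(0,1)$. Granting the theorem's inequality \eqref{vdot2_EL}, your sublevel-set idea then works, but it must replace ``$V(t)\le V(0)$'' in Step 4:
\begin{itemize}
\item In the saturated regime your Step 3 computation is legitimate. With \eqref{C1d_el} corrected as noted below, it gives $\dot V\le0$ on all of $\Omega_r^{'}(t)$.
\item In the unsaturated regime, $\dot V\le0$ whenever $\|r\|\ge\rho\phi_r(t)$.
\item On $\{\|r\|\le\rho\phi_r\}$ you have $r^\top Mr\le\rho^2\phi_r^{'2}$, so the barrier part of $V$ is at most $\tfrac12\log\frac{1}{1-\rho^2}$. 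Projection gives $\|\tilde\theta\|<2\bar\theta$.
\end{itemize}
A last-entry-time argument then yields $V(t)\le\max\{V(0),\tfrac12\log\frac{1}{1-\rho^2}+2\bar\theta^2\lambda_{\max}\{\Gamma^{-1}\}\}$ on $[0,t^\star)$, and the barrier contradiction goes through.

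A smaller point: evaluating your bracket at $\|r\|=\phi_r$ produces the term $|\dot\phi_r|k_{m_2}$, not the $|\dot\phi_r|k_{m_2}\phi_r$ that appears in \eqref{C1d_el}. So ``that is exactly \eqref{C1d_el}'' holds only when $\phi_r\ge1$ or $\dot\phi_r=0$.
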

\begin{proof}
   In presence of disturbance, \eqref{vdot2_EL} can be rewritten as
    \begin{align}
        \dot{V}\leq&-\frac{r^TKr}{\phi_{r}^{'2}-k_{m_{2}}\|r\|^2}+\frac{r^T\Delta \tau}{\phi_{r}^{'2}-k_{m_{2}}\|r\|^2}+\frac{r^Td}{\phi_{r}^{'2}-k_{m_{2}}\|r\|^2}
    \label{vdot2_EL_dis}
    \end{align}
    Using 
$r^T d \le \bar d\,\|r\|$ and the bound of $\|\Delta \tau(t)\|$ in \eqref{deltatau_bound}
yields the inequality
    \begin{align}
    \dot{V}\leq &-\frac{\lambda_{min}\{K\}\|r\|}{\phi_{r}^{'2}-k_{m_{2}}\|r\|^2}\bigg [\frac{ \phi_{\tau}-\Psi_3^{'}}{\lambda_{min}\{K\}}\nonumber\\
     &-\bigg(\frac{\Psi_1\|r\|+\Psi_2+\frac{|\dot{\phi}_r|}{\phi_r}}{\lambda_{min}\{K\}}-1\bigg)\|r\|\bigg]
    \label{vdot112_el_dis}
    \end{align}
where, $\Psi_3^{'}=\Psi_3+\bar{d}$ and the modified feasibility condition becomes,
\begin{align}
    \phi_{\tau}>(\Psi_1\phi_r+\Psi_2+|\dot{\phi}_r|k_{m_{2}}-\lambda_{min}\{K\})\phi_r+\Psi_3^{'}
    \label{c11_dis_el}
\end{align}
If \eqref{c11_dis_el} is satisfied, \eqref{uub1} holds, and it can be followed from the proof of Theorem~\ref{maintheorem_tvblf_el} that the time-varying state and input constraints \eqref{state_con_el_1}--\eqref{input_con_el}  are satisfied and all the closed-loop signals remain bounded for all time.
\end{proof}
\begin{figure*}[t]
    \centering
    \begin{subfigure}[t]{0.48\textwidth}
    \centering
\includegraphics[width=\linewidth]{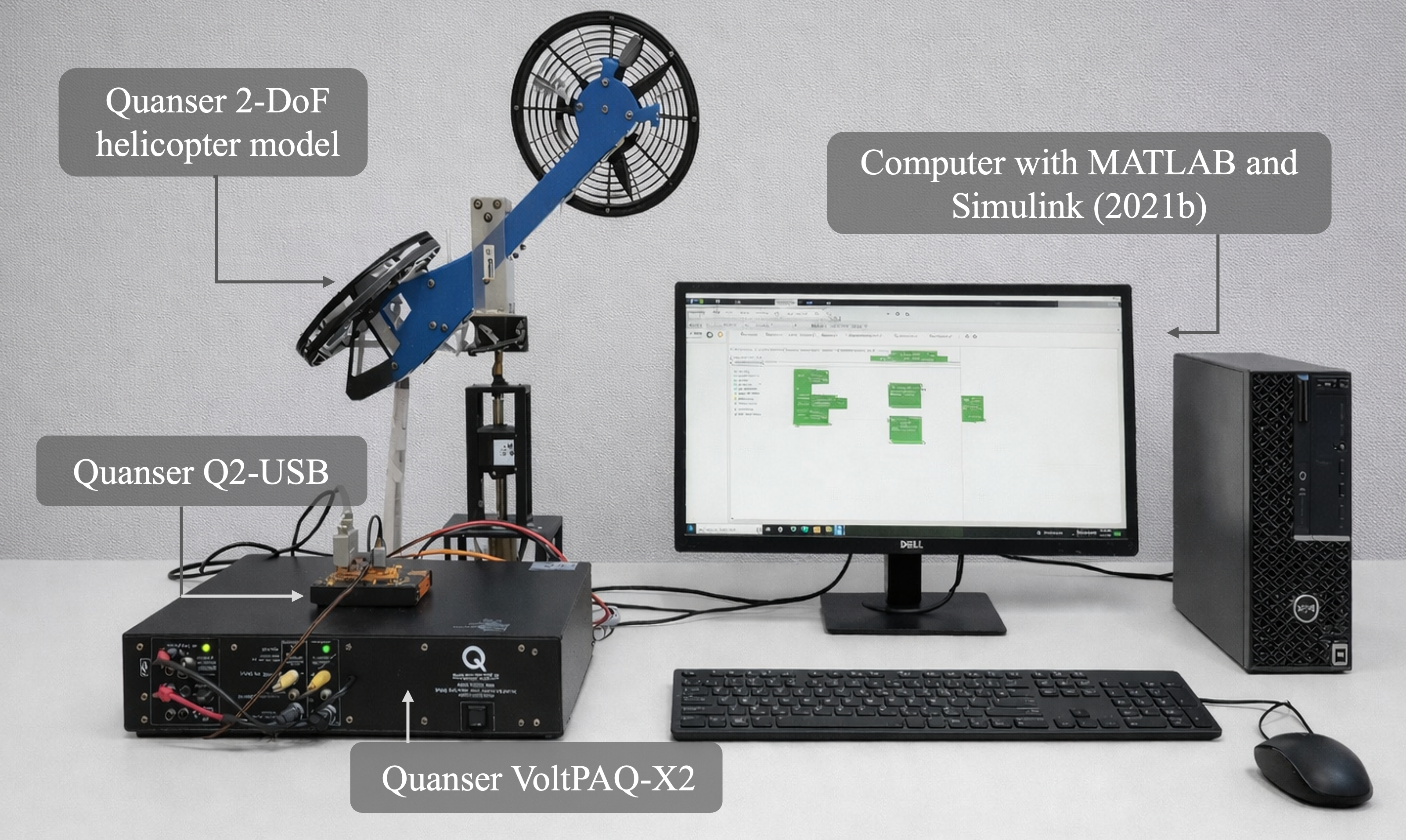}
    \caption{}
    \label{heli_setup}
    \end{subfigure}
    \begin{subfigure}[t]{0.48\textwidth}
    \centering
\includegraphics[width=0.85\linewidth]{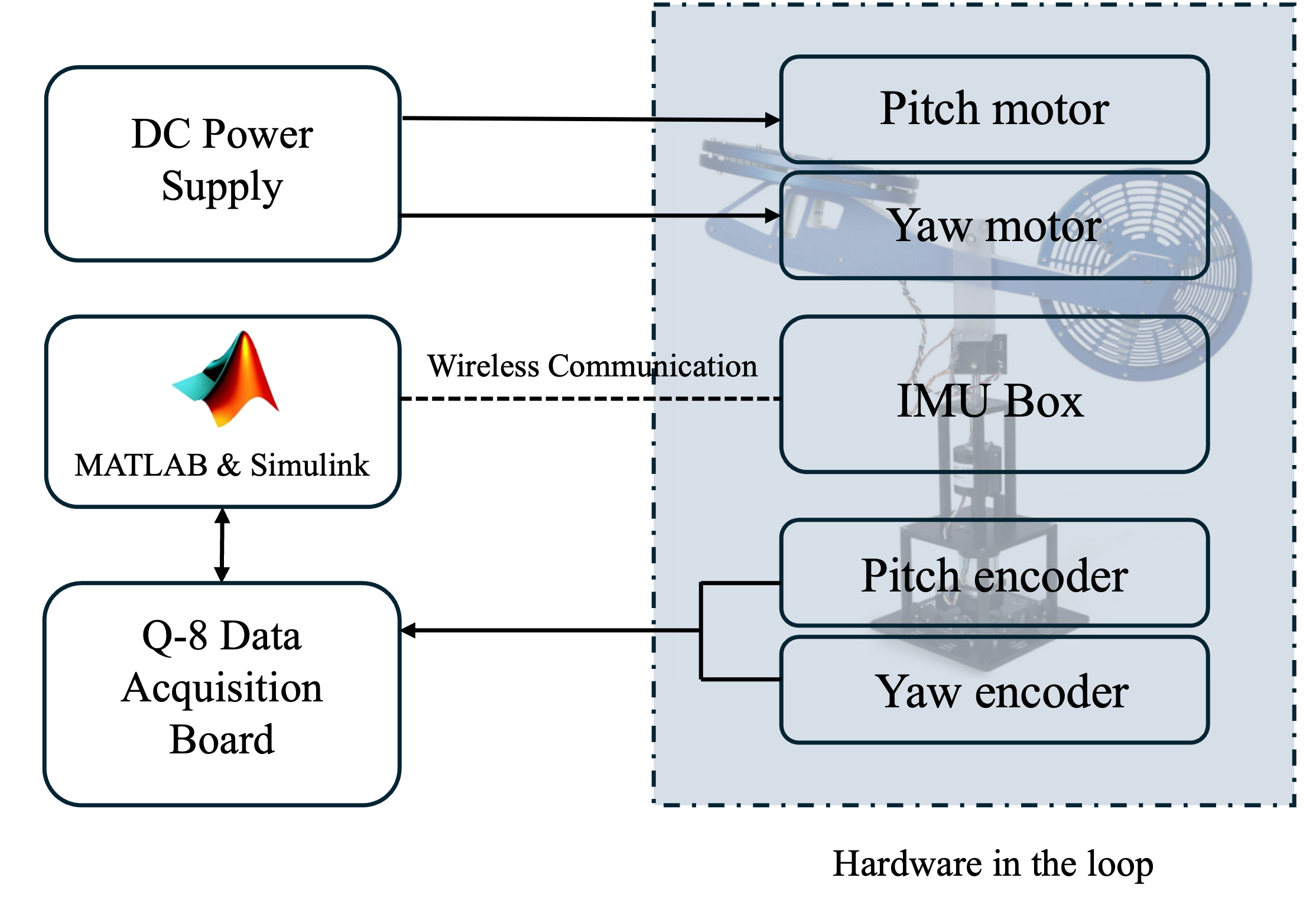}
    \caption{}
    \label{heli_block}
    \end{subfigure}
    \caption{(a) Experimental setup \iffalse \footnotemark $\;$ \fi and (b) simplified block diagram of the Quanser 2-DoF helicopter.}
    \label{}
\end{figure*}
\section{Designing Time-varying Constraints as Prescribed Performance Functions (PPF)}
\label{splcases}
\label{sec5}
After establishing closed-loop stability and constraint satisfaction in Section~\ref{PM}, we now present the systematic design of the time-varying state and input constraints. Specifically, we show that the proposed framework subsumes PPC and FC-based approaches as special cases by appropriately choosing prescribed performance functions (PPFs). This design flexibility allows shaping of the transient and steady-state behavior of both the tracking error and the control input while preserving constraint satisfaction.
A commonly adopted choice of the performance function is defined by
\begin{equation}
    \phi_{x}(t) \triangleq (\phi^{0}_{x} - \phi^{\infty}_{x} )\left(1 + \kappa_{x} t^{\nu_{x}}\right)^{-1} + \phi^{\infty}_{x}
    \label{ppf}
\end{equation}
where, \( \phi_{x}: [0, \infty) \to (\phi^{0}_{x}, \phi^{\infty}_{x}] \), \( \phi^{0}_{x} > 0 \) denotes the initial bound, ${\phi}^{\infty}_{x}>0$ represents the final bound and $\phi^{0}_{x}>\phi^{\infty}_{x}$. The parameters \( \kappa_{x}>0\), \(\nu_{x} >0 \) are user-defined positive constants that regulate the rate and shape of the performance envelope, respectively. Specifically, the parameter \( \kappa_{x} \) governs the decay rate of the performance function and, consequently, determines how rapidly the tracking error is driven toward its prescribed steady-state bound. Larger values of \( \kappa_{x} \) induce faster convergence, whereas smaller values lead to a more gradual reduction of the performance envelope. The parameter \( \nu_{x} \) modulates the shape of the convergence profile. For \( 0 < \nu_{x} < 1 \), the performance envelope exhibits a concave decay, characterized by a rapid initial contraction followed by a slower decay to the steady-state bound. In contrast, \( \nu_{x} > 1 \) yields a convex decay profile, resulting in slower initial convergence and faster steady-state convergence. The case \( \nu_{x} = 1 \) corresponds to an exponential-type decay.
The time at which the performance function reaches a prescribed accuracy level \( \epsilon_{x} \), with \( \epsilon_{x} > \phi^{\infty}_{x} \), can be explicitly determined by solving
\begin{align}
    \phi_{x}(t) = \epsilon_{x}
    \label{ppf_time}
\end{align}
Substituting the prescribed performance function in \eqref{ppf_time} yields
\begin{equation}
\label{ppf1}
\epsilon_{x}
=
(\phi^{0}_{x} - \phi^{\infty}_{x})
\left(1 + \kappa_{x} t^{\nu_{x}}\right)^{-1}
+ \phi^{\infty}_{x}.
\end{equation}
Solving \eqref{ppf1}, we obtain the convergence time as
\begin{equation}
\tau_{x}
=
\left[\frac{1}{\kappa_{x}}\left(
\frac{\phi^{0}_{x} - \phi^{\infty}_{x}}
{\epsilon_{x} - \phi^{\infty}_{x}}- 1\right)\right]^{\frac{1}{\nu_{x}}}
\label{pre-defined_time}
\end{equation}
which highlights a fundamental trade-off between the performance parameters \( \kappa_{x} \) and \( \nu_{x} \). For a fixed target convergence time \( \tau_{x} \), increasing \( \nu_{x} \) allows for a smaller value of \( \kappa_{x} \), resulting in a slower initial contraction of the performance envelope followed by a sharper reduction near steady state. Conversely, decreasing \( \nu_{x} \) requires a larger \( \kappa_{x} \), leading to a more aggressive initial decay of the performance bound. 
\begin{figure}[H]
    \centering
    \begin{subfigure}[t]{0.48\linewidth}
    \centering
\includegraphics[width=\linewidth]{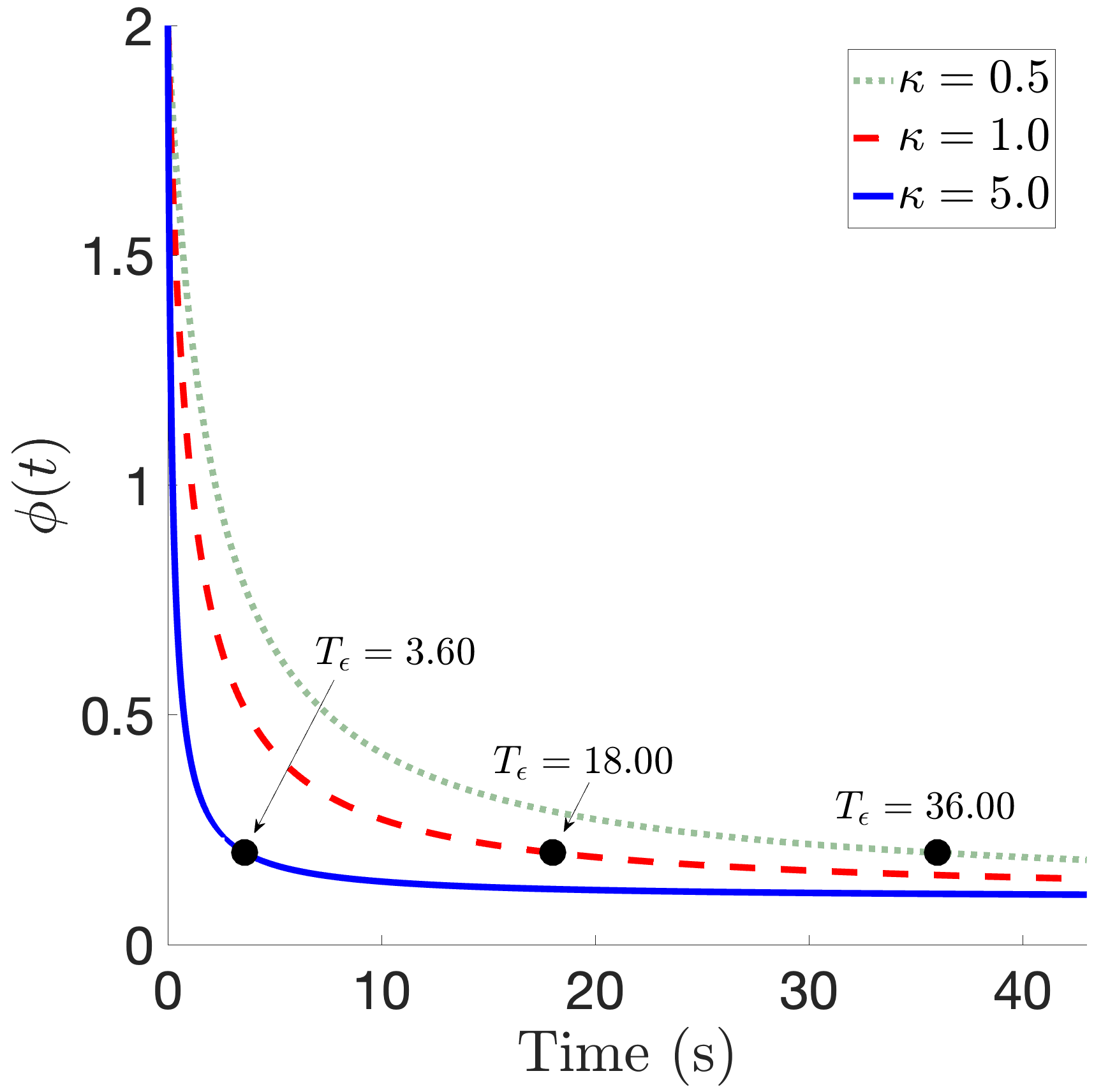}
    \caption{}
    \label{vary_kappa}
    \end{subfigure}
    \begin{subfigure}[t]{0.48\linewidth}
    \centering
\includegraphics[width=\linewidth]{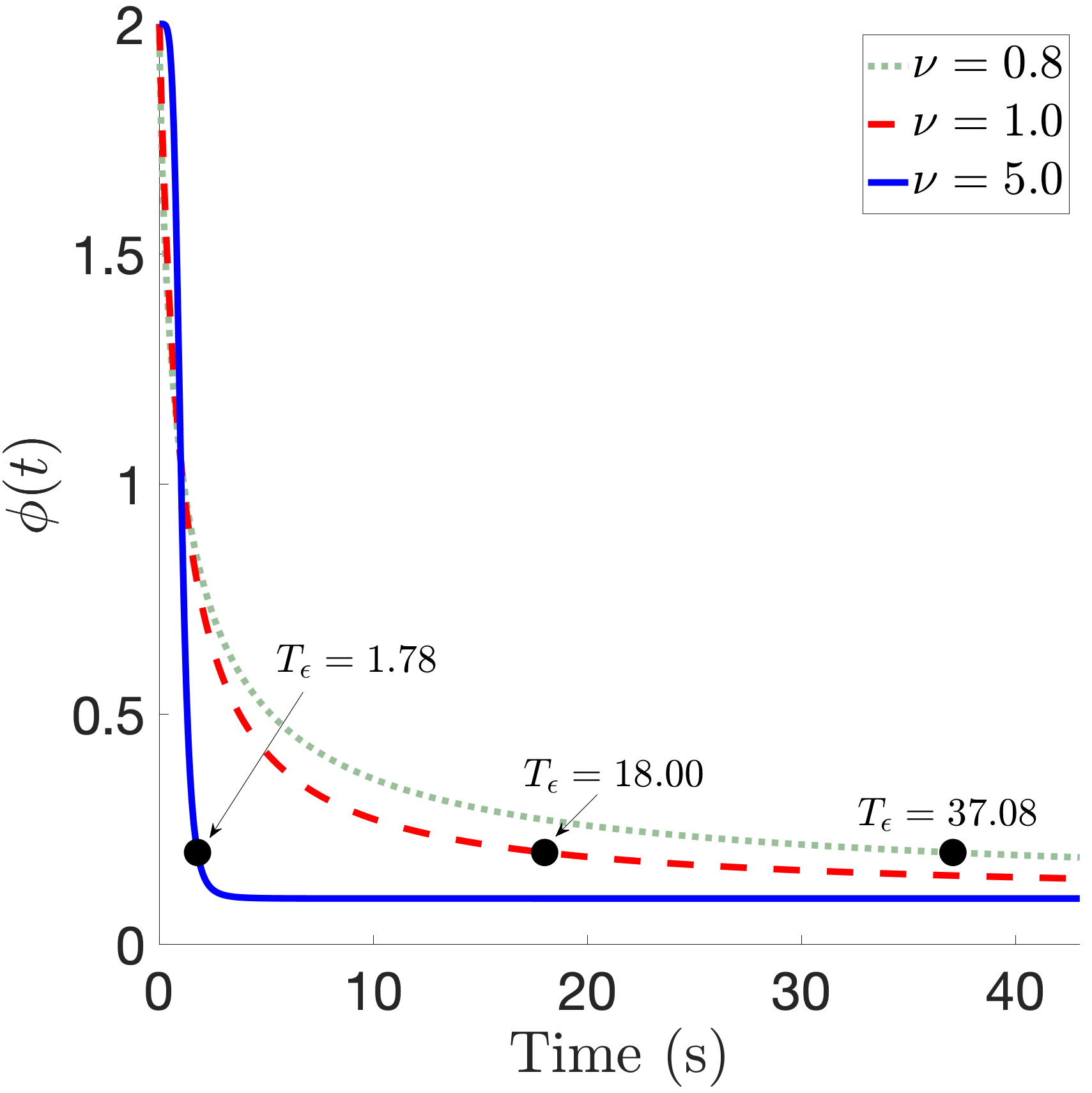}
    \caption{}
    \label{vary_nu}
    \end{subfigure}
    \caption{Effect of varying (a) $\kappa$  and (b) $\nu$ on the performance function $\phi_{x}(t)$, where $\phi_{x}^{0}=2$, $\phi_{x}^{\infty}=0.1$ and  $\epsilon=0.2$. $\nu=1$ and $\kappa=1$ are fixed in (a) and (b), respectively.}
    \label{vary_kappa_nu}
\end{figure}
\begin{remark}
    The convergence time \( \tau_{x} \) represents the tightening of the prescribed performance envelopes, rather than directly ensuring finite-time convergence of the trajectories themselves. Since the tracking error and the control input are guaranteed to remain within the corresponding pre-defined performance bounds for all \( t \ge 0 \), tuning \( \tau_{x} \) enables the designer to prescribe the time required for the constrained states to enter within a neighborhood of the steady-state limits.
\end{remark}
\begin{remark}
    In the proposed framework, the PPFs are employed exclusively to impose time-varying state and input constraints. As a result, the constraint design is decoupled from the adaptive law and feedback structure, allowing the designer to specify performance requirements independent of the controller parameterization. This is in contrast to classical PPC and FC formulations, where performance shaping is often intertwined with error transformations or controller gains. Moreover, while standard PPC or FC-based formulations enforce time-varying error bounds, the presence of input saturation is typically addressed either by online reshaping of the performance funnel \cite{fc3}, or by modifying the reference trajectory \cite{ppc3}. Consequently, imposing constraints directly on the plant states under input saturation is challenging using these frameworks. In contrast, our approach provides an offline verifiable feasibility condition that certifies a feasible set of time-varying constraints $(\phi_e(t), \phi_{\dot{e}}(t),\phi_{\tau}(t))$ for the given plant and reference model.
\end{remark}
\section{Experimental Results}
To illustrate the effectiveness of the proposed constrained control algorithm, real-time experiments are conducted on the Quanser 2-DoF helicopter model (Fig.~\ref{heli_block}). We consider a simplified dynamic model as follows. 
\begin{align}
\ddot{\theta}
&= \frac{1}{J_p+ml^2}\Big(
   \tau_{p}-B_p\dot{\theta}
    - \tfrac{1}{2} m l^2 \sin(2\theta)\,\dot{\psi}^2
    - m g l \cos\theta
   \Big) \\[4pt]
\ddot{\psi}
&= \frac{1}{J_y + m l^2 \cos^2\theta}\Big(
    \tau_y-B_y\dot{\psi}
    + m l^2 \sin(2\theta)\,\dot{\theta}\dot{\psi}
   \Big)
\end{align}
where, $\theta(t)\in\mathbb{R}$ (rad) and  $\psi(t)\in\mathbb{R}$  (rad) denote the pitch and yaw angles, respectively, while their time derivatives $\dot{\theta}(t)\in \mathbb{R}$ and $\dot{\psi}(t)\in\mathbb{R}$ (rad/sec)  represent the corresponding angular velocities. The generalized position and velocity vectors are defined as $q(t)=\begin{bmatrix}
    \theta(t) & \psi(t)
\end{bmatrix}^{\top}\in \mathbb{R}^2$ and $\dot{q}(t)=\begin{bmatrix}
    \dot{\theta}(t) & \dot{\psi}(t)
\end{bmatrix}^{\top}\in \mathbb{R}^2$, respectively.  \\
The experimental setup consists of two DC motors driving the front (main) and rear (tail) propellers. The main propeller predominantly governs the pitch motion, while the tail propeller influences the yaw motion.
Here, $\tau_p(t)$ and $\tau_y(t)$ denote the generalized control torques generated along the pitch and yaw axes, respectively, and the torque input vector is defined as 
$\tau(t)=\begin{bmatrix}\tau_p(t) & \tau_y(t)\end{bmatrix}^{\top}\in\mathbb{R}^2$. These torques are produced by the thrust forces generated by the propellers, which are actuated via the motor voltage $V(t)\in\mathbb{R}^2$. The mapping between the applied motor voltages and the resulting generalized torques is assumed to be linear and is given by 
\begin{align}
\tau(t)=T\,V(t)
\end{align}
where,
\begin{align}
T= 
\begin{bmatrix}
K_{pp} & K_{py}\\[2pt]
K_{yp} & K_{yy}\
\end{bmatrix}
\end{align} 
is a nonsingular thrust-torque conversion matrix. The system parameters and their numerical values are summarized in Table~\ref{tab:params_compact}. The pitch and yaw angles are measured using high-resolution optical encoders. The angular velocities are obtained via numerical differentiation of the encoder signals.  As the propeller angular velocities are not directly measurable, the actuator dynamics and gyroscopic coupling effects are neglected in the model derivation. 
The desired reference trajectory is given by
\begin{equation}
   q_d(t)=\begin{bmatrix}
    -34+ 2\sin(0.5t) \\ 0
    \end{bmatrix}
\end{equation}
The objective is to design an adaptive tracking controller such that the tracking errors $e(t)$ and $\dot{e}(t)$ remain bounded within a known dynamic envelope while simultaneously satisfying user-defined time-varying constraints on the state and input. Under Assumption~\ref{ref_model_assumption}, constraints imposed on the system state can be equivalently expressed as constraints on the tracking error, and vice versa. For simulation, the constraints are specified in terms of the tracking error and its derivative; however, unlike PPC and FC-based approaches, the proposed method allows these constraints to be directly interpreted as constraints on the system states.
 The time-varying bounds on the errorand control input are considered as
\begin{align}
    &\phi_e(t)=(\phi_e^0-\phi_e^{\infty})(1+\kappa_e t^{\nu_e})^{-1}+\phi_e^{\infty} \label{phie1}\\
     &\phi_{\dot{e}}(t)=(\phi_{\dot{e}}^0-\phi_{\dot{e}}^{\infty})(1+\kappa_{\dot{e}} t^{\nu_{\dot{e}}})^{-1}+\phi_{\dot{e}}^{\infty} \label{phiedot1}\\
    &\phi_{\tau}(t)=(\phi_{\tau}^0-\phi_{\tau}^{\infty})(1+\kappa_{\tau} t^{\nu_{\tau}})^{-1}+\phi_{\tau}^{\infty}
    \label{phiu1}
\end{align}
where, $\phi^0_{x},\phi_{x}^{\infty},\kappa_{x},\nu_{x}$ for $x=e,\dot{e},\tau$ are defined in Section \ref{sec5}. We have considered $\phi_e^0=11,\:  \phi_e^{\infty}=1, \phi_{\dot{e}}^0=4.5,\:  \phi_{\dot{e}}^{\infty}=1.5, \: \phi_{\tau}^0=6,\:\phi^{\infty}_{\tau}=5,\: \:\kappa_e=\kappa_{\tau}=0.2,\: \kappa_{\dot{e}}=0.1,\: \nu_e=\nu_{\dot{e}}=\nu_{\tau}=1$.
 The other parameters are chosen as: $\Gamma=2\mathbb{I}_{m\times m}$, 
$\alpha_3=0.1$, $\bar{d}=0.5$, $K=\begin{bmatrix}
    1.5 & 0\\
    0 & 0.2
\end{bmatrix}$, $\bar{\theta}=0.91$, $k_{m_{2}}=0.0908$, $k_v=0.03365$, $k_g=2.514$, $k_{f_{1}}=0$, and $k_{f_{2}}=0.8$. The constraints and design parameters are chosen such that they satisfy the feasibility condition C1.\\
In practice, the bound $\bar{\theta}$ can be obtained from prior knowledge of the physical parameters of the helicopter system. Conservative bounds on quantities such as the inertia, thrust coefficients, and motor constants are typically available from manufacturer specifications or experimental identification. These bounds can be used to derive an upper bound on the unknown parameter vector $\theta$. The resulting bound may be conservative; however, such conservatism is a common trade-off that ensures the feasibility condition to be verified a priori under pre-defined constraints and parametric uncertainties.\\
Fig.~\ref{pitch_angle} and Fig.~\ref{yaw_angle} show that the pitch and yaw angles track the corresponding reference trajectories. We show in Fig.~\ref{position_error} that the position tracking error remain within the predefined safe set, i.e., $\|e(t)\|<\phi_e(t)$ $\forall t \ge 0$. Similarly, the pitch and yaw velocities also track the desired trajectories (Fig.~\ref{pitch_velocity}, \ref{yaw_velocity}) and the velocity error satisfies the prescribed dynamic constraint, i.e., $\|\dot{e}(t)\|<\phi_{\dot{e}}(t)$ $\forall t \ge 0$ as shown in Fig.~\ref{velocity_error}. Further, the proposed controller ensures that the required control input remains within the user-defined time-varying constraint, as shown in Fig.~\ref{input_sim}. \\
Although the proposed design prescribes time-varying state and input constraints on the plant, in the experimental study, we have designed the time-varying error envelopes as prescribed performance functions (PPFs) to ensure that the trajectory tracking errors \(\|e(t)\|\) and \(\|\dot e(t)\|\) remain within these envelopes. Under Assumption~\ref{ref_model_assumption}, these error bounds can be directly translated into time-varying constraints on the plant states, i.e., \(\|q(t)\|<\phi_e(t)+34-2\sin(0.5 t)\) $\forall t \ge 0$ and \(\|\dot q(t)\|<\phi_{\dot{e}}+\cos(0.5t)\) $\forall t \ge 0$.\\ 
Equivalently, we may prescribe the desired time-varying state constraints and employing Lemma~\ref{lemma_constraint_conv}, translate them into admissible bounds on the trajectory tracking error, followed by the filtered tracking error. The proposed controller and adaptive laws then ensure forward invariance of the resulting error sets and, consequently, satisfaction of the original state and input constraints.
\begin{remark}
    A direct quantitative comparison with standard PPC or FC-based methods is not straightforward, since those approaches typically accommodate input saturation by modifying the reference trajectory or reshaping the performance envelope online, which is explicitly excluded in the present framework.
\end{remark}

\begin{figure}[H]
    \centering
    \includegraphics[width=\linewidth]{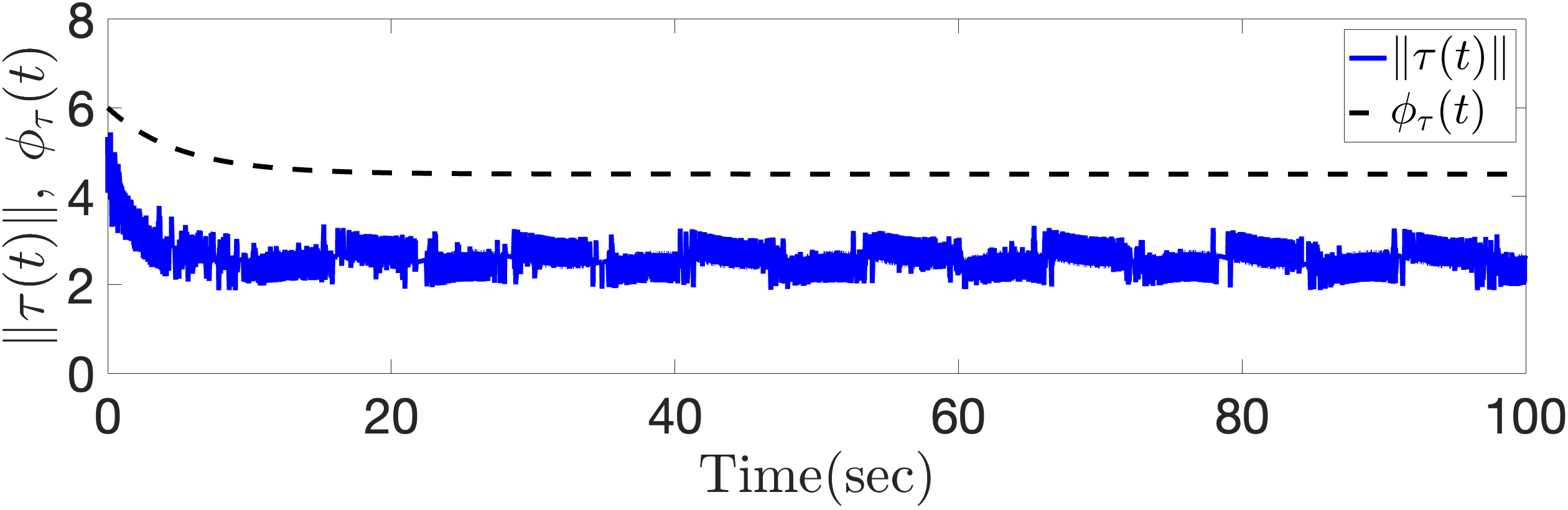}
    \caption{Required control input for the 2-DoF helicopter using the proposed controller.}
    \label{input_sim}
\end{figure}
\begin{table}[htbp]
\centering
\scriptsize
\setlength{\tabcolsep}{4pt}
\renewcommand{\arraystretch}{1.1}
\caption{Value of the model parameters with significance and units.}
\label{tab:params_compact}
\begin{tabular}{@{} l >{\raggedright\arraybackslash}p{0.72\columnwidth} >{\raggedright\arraybackslash}p{0.12\columnwidth} @{}}
\toprule
\textbf{Symbol} & \textbf{Significance (units)} & \textbf{Value} \\
\midrule
$J_p$      & Moment of inertia about pitch axis (kg/m$^2$) & $0.0384$ \\
$J_y$      & Moment of inertia about yaw axis (kg/m$^2$) & $0.0432$ \\
$m$ & Total mass of helicopter(Kg) & $1.38$\\
$l$        & Pitch–yaw lever distance (m) & $0.1857$ \\
$B_p$ & Viscous damping constant of pitch (N.m.s/rad) & 0.8\\
 $B_y$ & Viscous damping constant of yaw (N.m.s/rad) & 0.318\\
$K_{pp}$   & Torque constant acting on pitch axis from pitch motor (N·m/V) & $0.2041$\\
$K_{yy}$   & Torque constant acting on yaw axis from yaw motor (N·m/V) & $0.0720$ \\
$K_{py}$   & Torque constant acting on pitch axis from yaw motor (N·m/V) & $0.0068$ \\
$K_{yp}$   & Torque constant acting on yaw axis from pitch motor  (N·m/V) & $0.0219$ \\
$g$        & Gravitational acceleration (m/s$^2$) & 9.81 \\
\bottomrule
\end{tabular}
\end{table}
\begin{figure*}[t]
\centering
{\small
\begin{tikzpicture}[baseline]
\node[inner sep=0pt] (txt) {
\textcolor{blue}{\tikz{\draw[line width=2pt] (0,0.2em) -- (1.2em,0.2em);}}
~Plant state \quad
\textcolor{red}{\tikz{\draw[line width=2pt,dashed] (0,0.2em) -- (1em,0.2em);}}
~Reference trajectory \quad
\textcolor{black}{\tikz{\draw[line width=2pt,dashed] (0,0.2em) -- (1em,0.2em);}}
~Time-varying constraints
};
\end{tikzpicture}
}
\vspace{0.4em}

\begin{subfigure}[t]{0.32\textwidth}
    \centering
    \includegraphics[width=\linewidth]{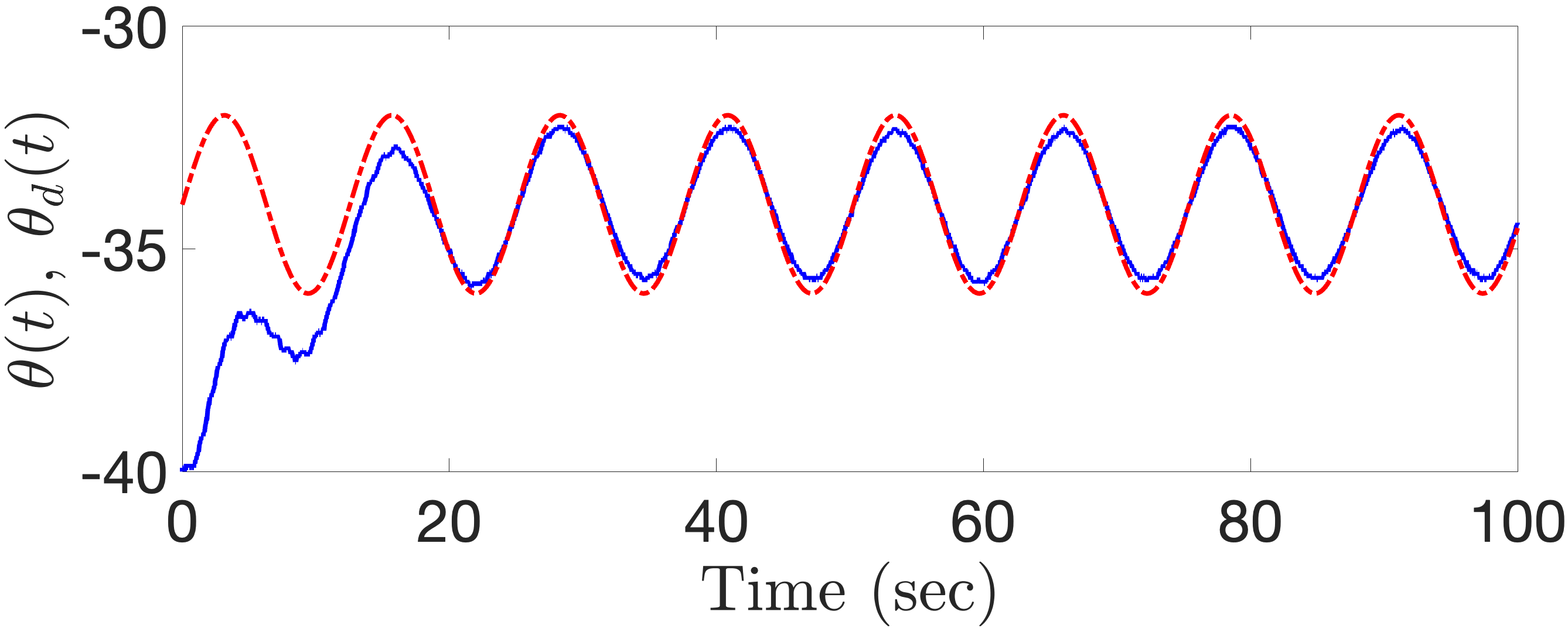}
    \caption{Pitch angle}
    \label{pitch_angle}
\end{subfigure}\hfill
\begin{subfigure}[t]{0.32\textwidth}
    \centering
    \includegraphics[width=\linewidth]{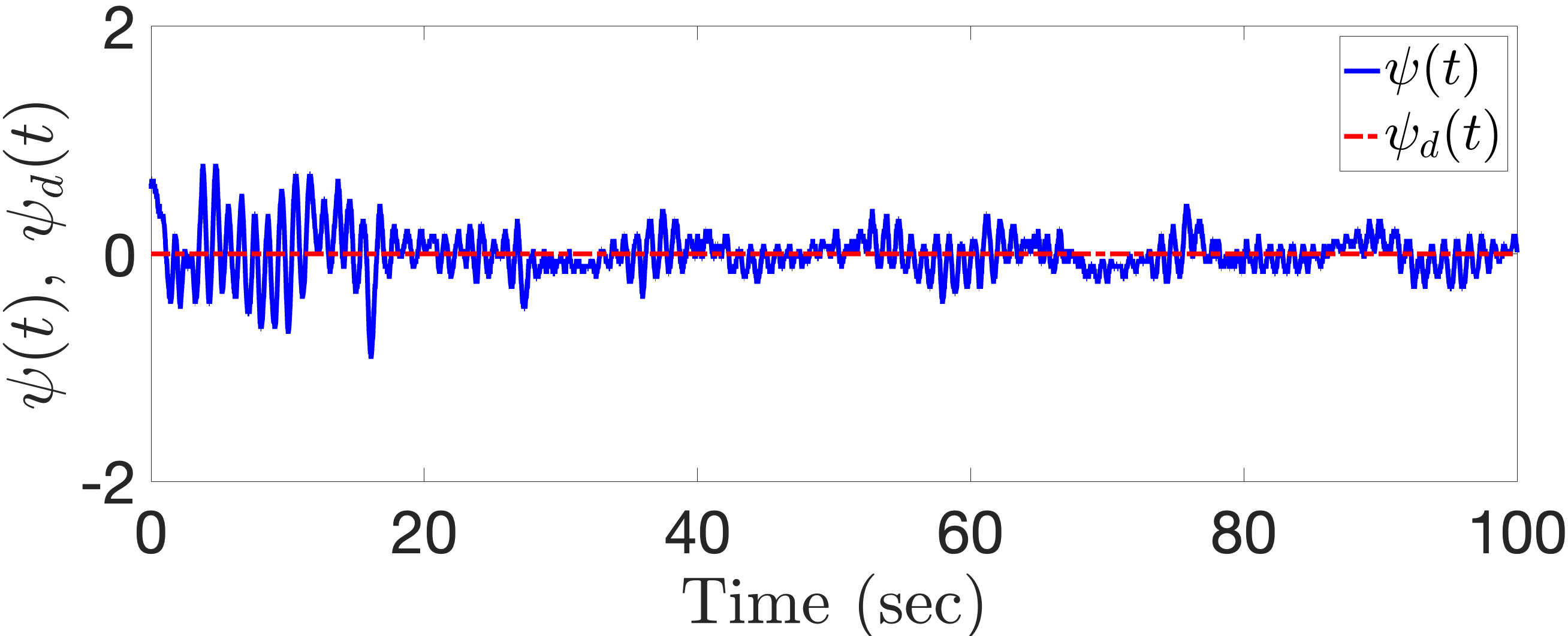}
    \caption{Yaw angle}
    \label{yaw_angle}
\end{subfigure}\hfill
\begin{subfigure}[t]{0.32\textwidth}
    \centering
    \includegraphics[width=\linewidth]{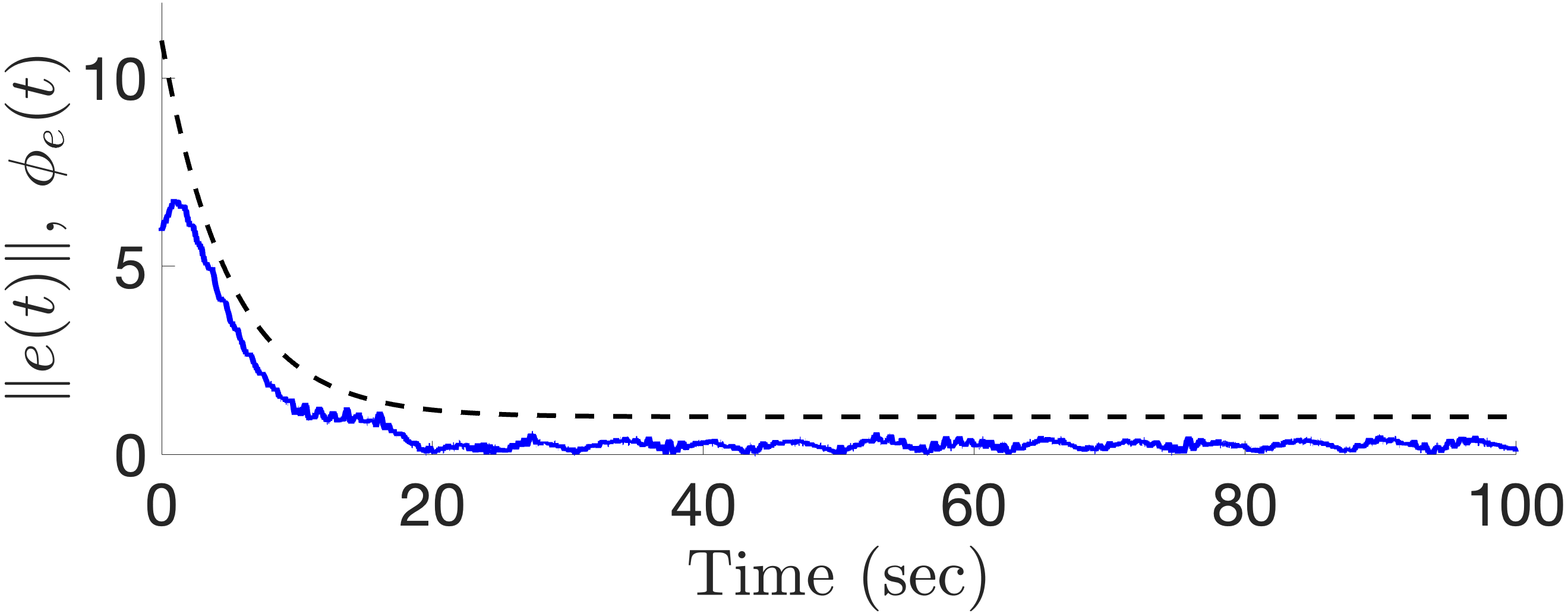}
    \caption{Position error}
    \label{position_error}
\end{subfigure}

\vspace{0.6em}

\begin{subfigure}[t]{0.32\textwidth}
    \centering
    \includegraphics[width=\linewidth]{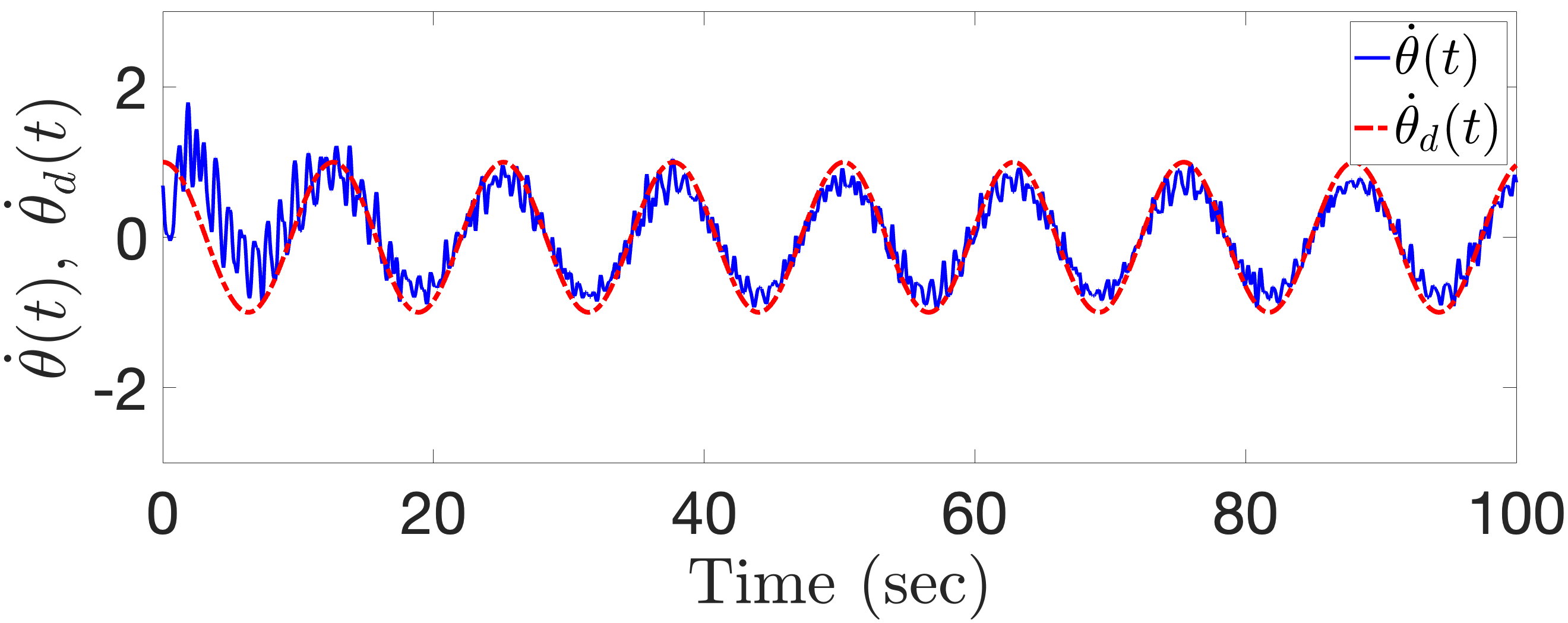}
    \caption{Pitch velocity}
    \label{pitch_velocity}
\end{subfigure}\hfill
\begin{subfigure}[t]{0.32\textwidth}
    \centering
    \includegraphics[width=\linewidth]{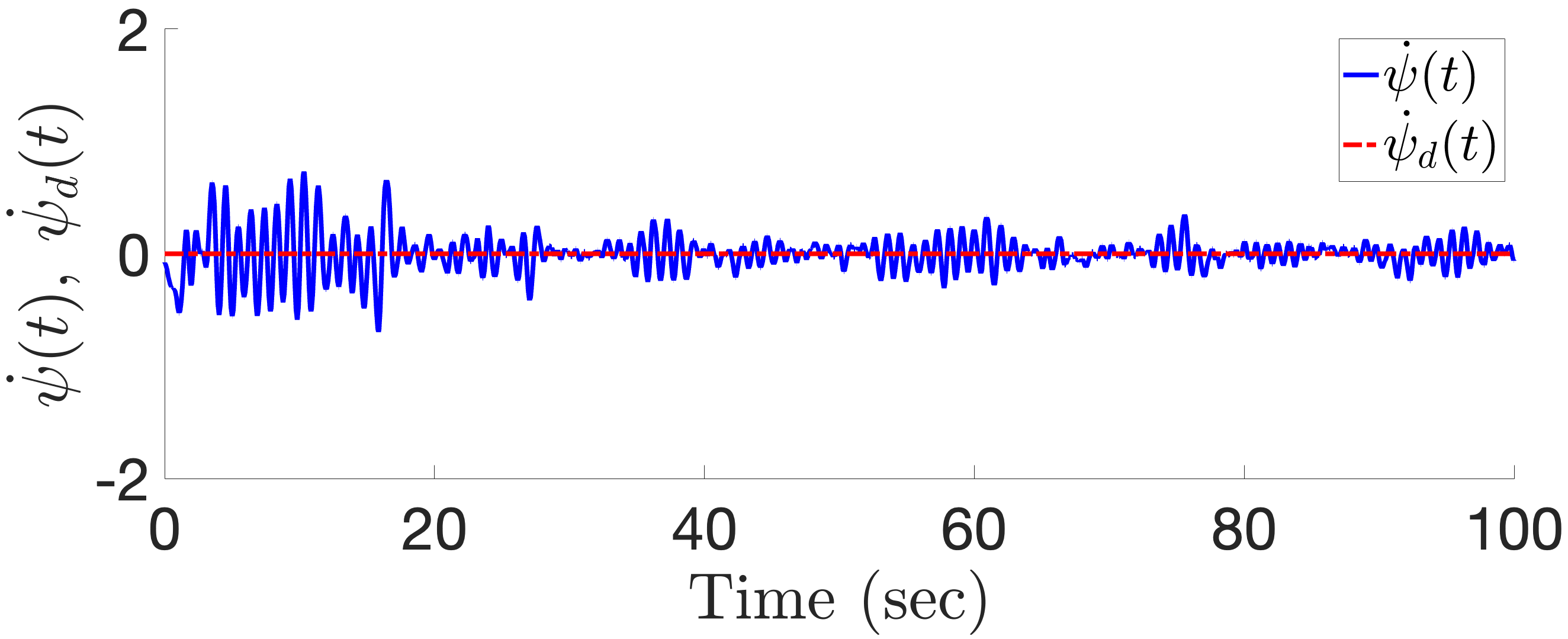}
    \caption{Yaw velocity}
    \label{yaw_velocity}
\end{subfigure}\hfill
\begin{subfigure}[t]{0.32\textwidth}
    \centering
    \includegraphics[width=\linewidth]{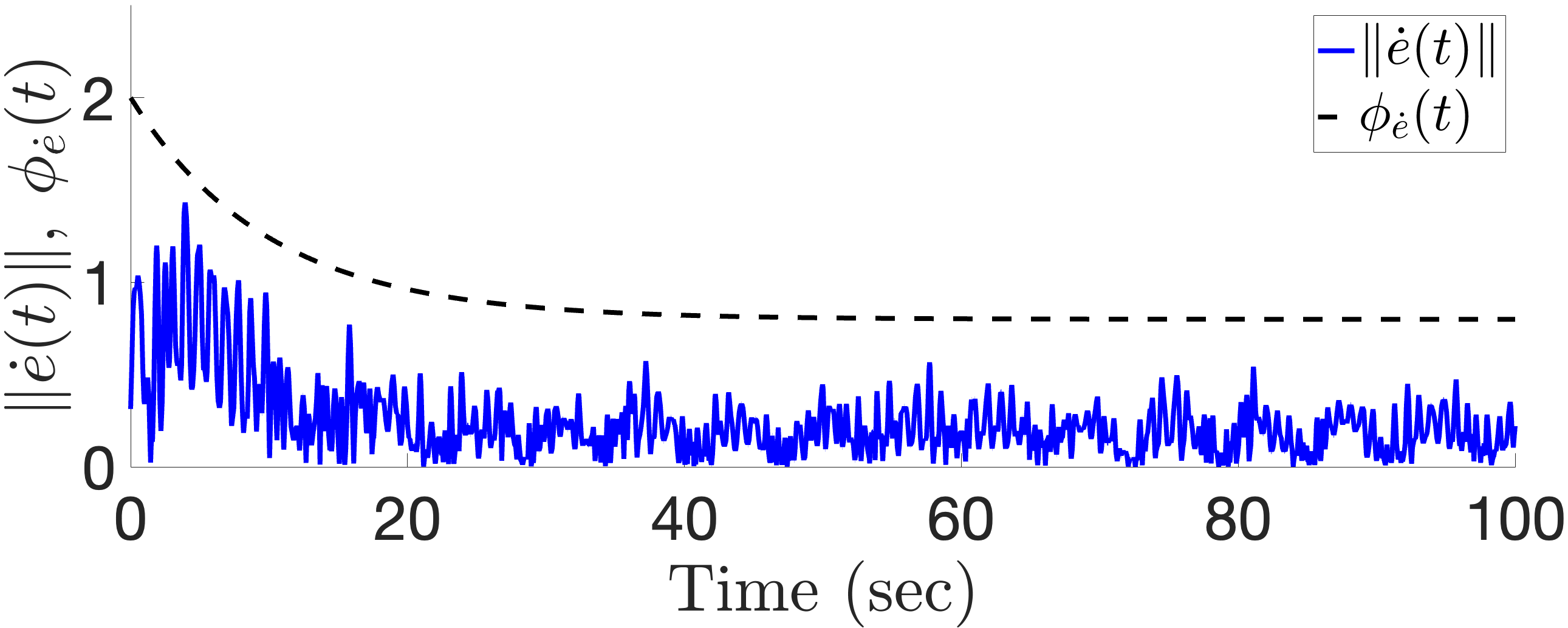}
    \caption{Velocity error}
    \label{velocity_error}
\end{subfigure}
\caption{Experimental results using the proposed controller.}
\label{fig:states_errors_2x3}
\end{figure*}
\section{Conclusion}
In this paper, we design an optimization-free safe adaptive tracking control algorithm for E-L systems, which ensures time-varying state and input constraint satisfaction in the presence of parametric uncertainties and bounded external disturbances. The control strategy integrates a time-varying barrier Lyapunov function (TVBLF)-based approach with a saturated controller, ensuring that both plant states and control inputs remain within predefined dynamic envelopes. We also guarantee boundedness of tracking error within known time-varying limits. We show that through the design of the safe envelopes as performance functions,  PPC and FC-based methods are subsumed as special cases of this more general approach, providing a viable alternative in cases where the reference or performance envelope modification is not permitted. Furthermore, an offline verifiable feasibility condition is provided to guarantee the feasibility of the control policy. Experimental results on Quanser 2-DoF helicopter demonstrate the efficacy of the proposed control scheme.

\bibliographystyle{ieeetr}
\bibliography{ref}

@book{Vidyasagar,
  title={Robot dynamics and control},
  author={Spong, Mark W and Vidyasagar, Mathukumalli},
  year={2008},
  publisher={John Wiley \& Sons}
}

@ARTICLE{ppc1,
  author={Bechlioulis, Charalampos P. and Rovithakis, George A.},
  journal={IEEE Transactions on Automatic Control}, 
  title={Robust Adaptive Control of Feedback Linearizable MIMO Nonlinear Systems With Prescribed Performance}, 
  year={2008},
  volume={53},
  number={9},
  pages={2090-2099},
  doi={10.1109/TAC.2008.929402}}

@article{ppc2,
  title={Prescribed performance adaptive control for multi-input multi-output affine in the control nonlinear systems},
  author={Bechlioulis, Charalampos P and Rovithakis, George A},
  journal={IEEE Transactions on automatic control},
  volume={55},
  number={5},
  pages={1220--1226},
  year={2010},
  publisher={IEEE}
}

@article{fc1,
  title={Funnel control of nonlinear systems},
  author={Berger, Thomas and Ilchmann, Achim and Ryan, Eugene P},
  journal={Mathematics of Control, Signals, and Systems},
  volume={33},
  pages={151--194},
  year={2021},
  publisher={Springer}
}

@article{fc2,
  title={Funnel control for a class of nonlinear infinite-dimensional systems},
  author={Hastir, Anthony and Winkin, Joseph J and Dochain, Denis},
  journal={Automatica},
  volume={152},
  pages={110964},
  year={2023},
  publisher={Elsevier}
}

@article{ppc3,
  title={Input-constrained prescribed performance control for high-order MIMO uncertain nonlinear systems via reference modification},
  author={Fotiadis, Filippos and Rovithakis, George A},
  journal={IEEE Transactions on Automatic Control},
  volume={69},
  number={5},
  pages={3301--3308},
  year={2023},
  publisher={IEEE}
}

@article{fc3,
  title={Input-constrained funnel control of nonlinear systems},
  author={Berger, Thomas},
  journal={IEEE Transactions on Automatic Control},
  volume={69},
  number={8},
  pages={5368--5382},
  year={2024},
  publisher={IEEE}
}

@article{funnelsaturation1,
  title={Funnel control with saturation: Linear MIMO systems},
  author={Hopfe, Norman and Ilchmann, Achim and Ryan, Eugene P},
  journal={IEEE Transactions on Automatic Control},
  volume={55},
  number={2},
  pages={532--538},
  year={2010},
  publisher={IEEE}
}

@inproceedings{zcbf1,
  title={Correct-by-design control barrier functions for euler-lagrange systems with input constraints},
  author={Cortez, Wenceslao Shaw and Dimarogonas, Dimos V},
  booktitle={American Control Conference},
  pages={950--955},
  year={2020},
  organization={}
}

@article{breeden2023robust,
  title={Robust control barrier functions under high relative degree and input constraints for satellite trajectories},
  author={Breeden, Joseph and Panagou, Dimitra},
  journal={Automatica},
  volume={155},
  pages={111109},
  year={2023},
  publisher={Elsevier}
}

@INPROCEEDINGS{icra_cbf,
  author={So, Oswin and Serlin, Zachary and Mann, Makai and Gonzales, Jake and Rutledge, Kwesi and Roy, Nicholas and Fan, Chuchu},
  booktitle={2024 IEEE International Conference on Robotics and Automation (ICRA)}, 
  title={How to Train Your Neural Control Barrier Function: Learning Safety Filters for Complex Input-Constrained Systems}, 
  year={2024},
  volume={},
  number={},
  pages={11532-11539},
  doi={10.1109/ICRA57147.2024.10610418}}

@incollection{camacho2007constrained,
  title={Constrained model predictive control},
  author={Camacho, Eduardo F and Bordons, Carlos},
  booktitle={Model predictive control},
  pages={177--216},
  year={2007},
  publisher={Springer}
}

@article{zheng1995stability,
  title={Stability of model predictive control with mixed constraints},
  author={Zheng, Alex and Morari, Manfred},
  journal={IEEE Transactions on automatic control},
  volume={40},
  number={10},
  pages={1818--1823},
  year={1995},
  publisher={IEEE}
}

@inproceedings{salehi2020safe,
  title={Safe tracking control of an uncertain euler-lagrange system with full-state constraints using barrier functions},
  author={Salehi, Iman and Rotithor, Ghananeel and Trombetta, Daniel and Dani, Ashwin P},
  booktitle={2020 59th IEEE Conference on Decision and Control (CDC)},
  pages={3310--3315},
  year={2020},
  organization={IEEE}
}

@book{li2018stability,
  title={Stability and performance of control systems with actuator saturation},
  author={Li, Yuanlong and Lin, Zongli},
  year={2018},
  publisher={Springer}
}

@article{roy2017adaptive,
  title={Adaptive--robust control of Euler--Lagrange systems with linearly parametrizable uncertainty bound},
  author={Roy, Spandan and Roy, Sayan Basu and Kar, Indra Narayan},
  journal={IEEE Transactions on Control Systems Technology},
  volume={26},
  number={5},
  pages={1842--1850},
  year={2017},
  publisher={IEEE}
}

@article{tvblf1,
  title={Control of nonlinear systems with time-varying output constraints},
  author={Tee, Keng Peng and Ren, Beibei and Ge, Shuzhi Sam},
  journal={Automatica},
  volume={47},
  number={11},
  pages={2511--2516},
  year={2011},
  publisher={Elsevier}
}

@article{elblf1,
  title={Prescribed performance control of uncertain Euler--Lagrange systems subject to full-state constraints},
  author={Zhao, Kai and Song, Yongduan and Ma, Tiedong and He, Liu},
  journal={IEEE Transactions on Neural Networks and Learning Systems},
  volume={29},
  number={8},
  pages={3478--3489},
  year={2017},
  publisher={IEEE}
}

@ARTICLE{ghoshtac,
  author={Ghosh, Poulomee and Bhasin, Shubhendu},
  journal={IEEE Transactions on Automatic Control}, 
  title={State and Input Constrained Model Reference Adaptive Control With Robustness and Feasibility Analysis}, 
  year={2026},
  volume={},
  number={},
  pages={1-7},
  doi={10.1109/TAC.2026.3654318}}

@article{tee2011control,
  title={Control of nonlinear systems with time-varying output constraints},
  author={Tee, Keng Peng and Ren, Beibei and Ge, Shuzhi Sam},
  journal={Automatica},
  volume={47},
  number={11},
  pages={2511--2516},
  year={2011},
  publisher={Elsevier}
}

@incollection{lavretsky2012robust,
  title={Robust adaptive control},
  author={Lavretsky, Eugene and Wise, Kevin A},
  booktitle={Robust and adaptive control: With aerospace applications},
  pages={317--353},
  year={2012},
  publisher={Springer}
}

@INPROCEEDINGS{myel1,
  author={Ghosh, Poulomee and Bhasin, Shubhendu},
  booktitle={2023 American Control Conference (ACC)}, 
  title={Adaptive Tracking Control of Uncertain Euler-Lagrange Systems with State and Input Constraints}, 
  year={2023},
  volume={},
  number={},
  pages={4229-4234},
  doi={10.23919/ACC55779.2023.10156436}}

@article{BLF,
  title={Barrier Lyapunov functions for the control of output-constrained nonlinear systems},
  author={Tee, Keng Peng and Ge, Shuzhi Sam and Tay, Eng Hock},
  journal={Automatica},
  volume={45},
  number={4},
  pages={918--927},
  year={2009},
  publisher={Elsevier}
}

@article{BLF2,
  title={Barrier Lyapunov functions-based adaptive control for a class of nonlinear pure-feedback systems with full state constraints},
  author={Liu, Yan-Jun and Tong, Shaocheng},
  journal={Automatica},
  volume={64},
  pages={70--75},
  year={2016},
  publisher={Elsevier}
}

@article{mpc11,
  title={Constrained model predictive control: Stability and optimality},
  author={Mayne, David Q and Rawlings, James B and Rao, Christopher V and Scokaert, Pierre OM},
  journal={Automatica},
  volume={36},
  number={6},
  pages={789--814},
  year={2000},
  publisher={Elsevier}
}

@book{slotine,
  title={Applied nonlinear control},
  author={Slotine, Jean-Jacques E and Li, Weiping and others},
  volume={199},
  year={1991},
  publisher={Prentice hall Englewood Cliffs, NJ}
}

@INPROCEEDINGS{arc1,
  author={Xiangbin Liu and Hongye Su and Bin Yao and Jian Chu},
  booktitle={2008 47th IEEE Conference on Decision and Control}, 
  title={Adaptive robust control of a class of uncertain nonlinear systems with unknown sinusoidal disturbances}, 
  year={2008},
  volume={},
  number={},
  pages={2594-2599},
  doi={10.1109/CDC.2008.4739272}}
\end{document}